\documentclass[orivec]{llncs}

\pagestyle{plain}

\usepackage{amsfonts}
\usepackage[hide]{ed}
\usepackage[all]{xy}
\usepackage{mathtools}
\usepackage{wrapfig}

\usepackage{misc}

% Listings

\usepackage{lstsemantic}
\lstloadlanguages{dolText}
\lstset{basicstyle=\ttfamily\footnotesize,columns=fixed,language=dolText,morekeywords={props,excluding,ObjectProperty,Class,DisjointUnionOf,SubClassOf,Characteristics,Transitive,Asymmetric,SubPropertyOf,DisjointClasses,EquivalentTo,inverse,only,forall,iff,if,or,exists,bridge,distributed},escapechar=@,mathescape}

\usepackage{paralist}
\usepackage{local}

\setcounter{tocdepth}{3}
\usepackage[bookmarks,linkcolor=red,citecolor=blue,colorlinks,breaklinks,bookmarksopen,bookmarksnumbered]{hyperref}

\title{Canonical Selection of Colimits}
\author{
Till Mossakowski\inst{1},
Florian Rabe\inst{2} and
Mihai Codescu\inst{3}
}
\institute{Otto-von-Guericke-University of Magdeburg, Germany\and
Jacobs University Bremen, Germany,\and
Free University of Bozen-Bolzano, Italy}

\begin{document}

\maketitle

\begin{abstract}
Colimits are a powerful tool for the combination of objects in a
category.  In the context of modeling and specification, they are used
in the institution-independent semantics (1) of instantiations of parameterised
specifications (e.g.\ in the
specification language CASL), and (2) of combinations of
networks of specifications (in the
OMG standardised language DOL).
%Both languages have an
%institution-independent semantics. Colimits are taken in the
%category of signatures (which can be any category).

The problem of using colimits as the semantics of certain language
constructs is that they are defined only up to isomorphism. However,
the semantics of a complex specification in these languages is given
by a signature and a class of models over that signature -- not by an
isomorphism class of signatures. This is particularly relevant when a
specification with colimit semantics is further
translated or refined. The user needs to know the symbols of a
signature for writing a correct 
refinement.

Therefore, we study how to usefully choose one representative of the
isomorphism class of all colimits of a given diagram. We develop
criteria that colimit selections should meet.  We work over arbitrary
inclusive categories, but start the study how the criteria can be
met with $\Set$-like categories, which are often used as signature
categories for institutions.
\end{abstract}

\section{Introduction}
\begin{quotation}
``Given a species of structure, say widgets, then the result
of interconnecting a system of widgets to form a super-widget
corresponds to taking the {\em colimit} of the diagram
of widgets in which the morphisms show how they are interconnected.''
\cite{CategoricalManifesto}
\end{quotation}

\paragraph{Motivation}
The notion of colimit provides a natural way to abstract the idea that
some objects of interest, which can be e.g.~logical theories, software
specifications or semiotic systems, are combined while taking into
account the way they are related. Specification languages whose
semantics involves colimits are CASL \cite{CASL-RM} (for instantiations of
parameterised specifications) and its extension DOL (see \cite{MossakowskiEtAl13d,DOL-OMG} and \url{http://dol-omg.org}) (for combination
of networks of specifications). Specware  \cite{journals/ase/WilliamsonHB01}
provides a tool computing colimits of specifications that has been
successfully used in industrial applications; \cite{DBLP:conf/birthday/Smith06}
makes a strong case for the use of colimits in formal software
development. The Heterogeneous Tool
Set (HETS, \cite{hets}) also supports the computation of colimits,
covering even the heterogeneous case \cite{weakcol}.
Colimits have been used for ontology alignment \cite{zimmermann-fois}
and database integration \cite{DBLP:journals/corr/SchultzSVW16}.
Recently,
colimits have provided the base mechanism for concept creation by
blending existing concepts \cite{KutzEtAl14b}. Moreover, colimits
provide the basis for a good behaviour of parameterisation in a
specification language \cite{Diaconescu1993}.\footnote
{When we use the term \emph{specification}, our theory applies
equally to \emph{ontologies} and \emph{models}, provided these
have a formal semantics as theories of some institution.}

The problem that arises naturally when using colimits is that they are
not unique, but only unique up to isomorphism. By contrast, the
semantics of a specification involves a specific signature, which must
be selected from this isomorphism class.  Also, any implementation of
colimit computation in a tool must make an according choice of how the
colimiting object actually looks, in particular when it comes to the
names of its symbols.
Otherwise, users have no control over the well-formedness of further specifications built from the colimit: Referring to symbols of the colimit is only possible with knowledge about the actual symbol names appearing in the colimit.
To be useful in practice, it is desirable that such a choice appears natural to the user.
For example, the names of original symbols should be preserved whenever possible.

\paragraph{Contribution}
Our contribution is two-fold.
Firstly, we develop a suite of properties that can be used to evaluate and classify different colimit selections.
All of these are motivated by the desire that parameterisation and combination of networks enjoy good properties.
We show that these properties, although all desirable, cannot be realised at once.
Secondly, we give solutions for systematically selecting colimits in various signature categories that provide good trade-offs between these conflicting properties.

\paragraph{Related work}
The semantics of CASL~\cite{BaumeisterEtAl04,Mossakowski00c} provides
some method for the computation of specific pushouts. However, the chosen
institutional framework (institutions with a lot of extra
infrastructure) is rather complicated, while we use the much more
natural framework of inclusive categories. Moreover, desirable properties
of pushouts are only discussed casually.
Rabe~\cite{rabe:howto:14}
discusses three desirable properties of selected pushouts and
conjectures that they are not reconcilable.  We shed light on this
conjecture and provide a total selection of pushouts, while
\cite{rabe:howto:14} only provides a partial selection.
In the context of Specware, colimits are computed as equivalence classes \cite{DBLP:conf/birthday/Smith06}.
The systematic investigation of selected colimits (i.e.\ beyond pushouts) is new to our knowledge.

\paragraph*{Overview}
In Section~\ref{sec:prelim}, we recall some preliminaries as well as language constructs from CASL and DOL that involve colimits in their semantics.
Then we develop criteria for elegant colimit selection in Section~\ref{sec:properties}.
In Section~\ref{sec:colimits}, we give colimit selections for various categories.
We will also see that not every category admits a selection that satisfies all desirable properties.
Therefore, we pursue a second goal in Section~\ref{sec:categories}, namely to find useful categories for which we can give particularly elegant selections.

Some proofs have been omitted.
These were available during review and will appear in an extended version of this paper.

\section{Preliminaries}\label{sec:prelim}
\subsection{Categories with Symbols}

The large variety of logical languages in use can be captured at an
abstract level using the concept of \emph{signature categories}.
The objects of such a category are signatures which introduce syntax for the domain of interest, and the signature morphisms capture relations between signatures such as changes of notation, extensions, or translations.
For example, signature categories feature heavily in the framework of \emph{institutions} \cite{GoguenBurstall92}, where they are the starting point for abstractly capturing the semantics of logical systems and developing results independently of the specific features of a logical system.

In institutions and similar frameworks, the signature category is abstract, i.e., it is an arbitrary category.
In practice, some properties of signature categories have emerged that are
 satisfied by the over-whelming majority of logical systems, and that are
 very helpful for establishing generic results.

For colimits, two properties are particularly important:

\begin{definition}[\cite{CazuanescuR1997}]  
An \defsty{inclusive category} consists of a category $\C$ with a broad subcategory\footnote{That is, with the same objects as $\C$.} that is a partially ordered class.

The morphisms of the broad subcategory are called \defsty{inclusions}, and we write $A\hra B$ if there is an inclusion from $A$ to $B$.
We denote the (quasi-)category of inclusive categories and inclusion
preserving functors by $\ICat$.
\end{definition}
In particular, $\Set$ is an inclusive category via the standard inclusions $A\hra B$ iff $A\subseteq B$.
Arbitrary categories can be recovered by using the identity relation as the partial order.

%These concepts can be captured in a categorical setting using
%\emph{inclusion systems}
%\cite{Diaconescu1993,DBLP:conf/birthday/GoguenR04}.  However, we will
%work with a slightly different version of this notion, introduced in
%\cite{IbanezEtAl15}:

\begin{definition}
  An (inclusive) \emph{category with symbols} consists of an (inclusive) category $\C$ and an (inclusion-preserving) functor $|\_|:\C\to\Set$
  %\footnote{That is, $(\C,|\_|)$ is a concrete category.}
    We call $|A|$ the set of symbols of $A$.
\end{definition}
In particular, $\Set$ is an inclusive category with symbols via $|A|=A$.

The intuition behind these definitions is that very often signatures can be seen as sets of named declarations.
Then the subset relation defines the inclusion relation, and the names of the declarations define the set of symbols.

Signature categories are usually such that signatures that differ only in the choice of names are isomorphic.
Then a key difficulty about colimits lies in selecting the set of names to be used in the colimit.\footnote{For inclusive categories for which the symbol functor uniquely lifts colimits, solving colimit selection for $\Set$ already
  suffices. However, the inclusive categories studied in Sect.~\ref{sec:split} typically
  do not enjoy this property.}

%\begin{definition}[Lifting Inclusions]
%An inclusive category with symbols admits \emph{lifting of inclusions}, if $|A|\subseteq|B|$ implies $A\subseteq B$.
%\end{definition}
%Many $\Set$-like and algebraic categories enjoy this property, while topological categories typically do not.

\ednote{Reviewer suggests: Consider inclusive categories with objects for which the symbol functor lifts uniquely the colimits of certain diagrams (e.g. name-clash-free diagrams). TM: see previous footnote.}

\subsection{Specification Operators with Colimit Semantics}\label{sec:constructs}

The power of the abstraction provided by institutions and related systems is best
illustrated by the fact that languages like CASL and DOL provide
syntax and semantics of specifications \emph{in an arbitrary institution}.
This is done by defining operators on specifications and morphisms.

A \emph{basic specification} consists of a signature and a set of sentences\footnote{It is straightforward but not essential here to make the notion of sentence precise.}---called the \emph{axioms}---over it.
A kernel language of specification operators has been introduced in \cite{SpecInst}.
It includes union, renaming and hiding.
CASL and DOL provide many further constructs.

The semantics of many of these operators can be defined as the colimit of a certain diagram.
Therefore, such operators are often defined only up to isomorphism.
In the sequel we recall important examples from CASL (parameterisation) and DOL (combination of networks).

%DOL also includes a construct postulating refinements between
%structured specifications.  If $SP_1$ and $SP_2$ are structured
%specifications and $\sigma:\Sig(SP_1) \to \Sig(SP_2)$ is a signature
%morphism, we write {\bf refinement} $R = SP_1$ {\bf refined via}
%$\sigma$ {\bf to} $SP_2$ to require that each model of $SP_2$ should
%reduce via $\sigma$ to a model of $SP_1$. When this holds, $\sigma$ is
%called a specification morphism between $SP_1$ and $SP_2$.

\paragraph{Parametrisation}\label{sec:param}
Many specification languages, including CASL, allow specifications to
be generic. A generic specification consists of a (formal) parameter
specification $P$ and a body specification $B$ extending the formal
parameter, i.e.\ $P\hra B$.
We make $P$ explicit by writing $B[P]$.

A typical example is the specification $\mathit{List}[Elem]$ for lists parametrised by the specification $\mathit{Elem}$ which declares a sort $elem$.

%We have that $\Sig(B[P]) = \Sig(P) \hookrightarrow \Sig(P$
%{\bf and} $B)$ and $\Mod(B[P])$ is a pair consisting of the class of
%models of $P$ and the class of models of $P$ {\bf and} $B$ that extend
%a model of $P$. 

Given an actual parameter specification $A$ and a specification morphism
$\sigma:P\to A$, we write the instantiation of $B[P]$ with $A$ via $\sigma$ as $B[A\textbf{ fit }\sigma]$.
Its semantics is given by the pushout on the left below:
 \[\xymatrix{
    P\, \ar@{^{(}->}[r] \ar[d]^{\sigma} & B \ar[d]\\
    A\, \ar@{^{(}->}[r] & B[A\textbf{ fit }\sigma]
  }\begin{array}{l}\\ \\ \text{ e.g. }\end{array}\xymatrix{
    \mathit{Elem}\, \ar@{^{(}->}[r] \ar[d]^{\sigma} & \mathit{List[Elem]} \ar[d]\\
    \mathit{Nat}\, \ar@{^{(}->}[r] & \mathit{List}[\mathit{Nat}\textbf{ fit } elem\mapsto nat]
  }\]
The right hand side above gives a typical example where the specification $\mathit{Nat}$ declares a sort $nat$ that is used to instantiate the sort $elem$.
Note that the above is also an example of how a pushout of an inclusion can often be selected as an inclusion again. We will get back to that in Def.~\ref{def:pushoutinclusion}.

A natural requirement is that the instantiated body $B[A\textbf{ fit }\sigma]$ extends the actual parameter $A$ in much the same way as the body $B$ extends the formal parameter $P$.
For example, a sort $\mathit{list}$ introduced in the specification $\mathit{List}$ should be kept (and not renamed) within the instantiation 
$\mathit{List}[\mathit{Nat}\textbf{ fit }elem \mapsto nat]$.
Technically, this means that the semantics should not be an \emph{arbitrary} colimit.
Similarly, the user would expect that any symbols declared in the body should appear verbatim in the instantiated body, unless they have been renamed by $\sigma$.

%Finally, a natural expectation is that repeated
%instantiations are coherent, e.g.\ when specifying lists of lists
%of natural numbers in (omitting \textbf{fit} $\sigma$ for readability):
% $$\xymatrix{
% & \mathit{Elem}\, \ar@{^{(}->}[r] \ar[d]^{\mathit{Elem}\mapsto\mathit{List}} 
% & \mathit{List[Elem]}\, \ar[d]\\
%   \mathit{Elem}\, \ar@{^{(}->}[r] \ar[d]^{\mathit{Elem}\mapsto\mathit{Nat}} 
% & \mathit{List[Elem]}\, \ar[d] \ar@{^{(}->}[r] \ar[d] 
% & \mathit{List[List[Elem]]} \ar[d]\\
%   \mathit{Nat}\, \ar@{^{(}->}[r] 
% & \mathit{List}[\mathit{Nat}]\,\ar@{^{(}->}[r] 
% & \mathit{List[List[Nat]]}
%  }$$
%it does not matter whether we compute the two pushouts in two steps,
%or the outer pushout in one step (and this holds both horizontally and
%vertically). These requirements will be discussed in
%Sec.~\ref{sec:colimits}.

\paragraph{Networks of Specifications}\label{sec:networks}
In DOL, a network of specifications (called distributed specification
in \cite{MossakowskiTarlecki09}) is a graph. Its nodes are labelled
with pairs $(O, SP)$ where $SP$ is a specification and $O$ its
name. The edges are theory morphisms $(O_1,SP_1)\stackrel{\sigma}{\to}
(O_2, SP_2)$, either induced by the import structure of the
specifications, or by refinements.
    
\begin{wrapfigure}{r}{2.5cm}
\vspace{-3em}
\begin{lstlisting}
network N =
 $N_1, \ldots, N_m,$
 $O_1, \ldots, O_n,$
 $M_1, \ldots, M_p $
\end{lstlisting}
% excluding
% $N'_1, \ldots, N'_r,$
% $O'_1, \ldots, O'_s,$
% $M'_1, \ldots, M'_t,$
% $I_1, \ldots I_u$
\vspace{-3em}
\end{wrapfigure}

A network is specified by giving a list of specifications $O_i$, morphisms $M_i$ between them
and sub-networks $N_i$, with the intuition that the graph of the network is the union of the graphs of all its elements.
%, followed by another optional list of network elements that will be excluded from this graph.
%% The DOL syntax for specifying networks is given on the right.
%% Here $N_i, N'_i$ are (sub-)networks, $O_i, O'_i$ are specifications, $M_i, M'_i$ are morphisms and $I_i$ are imports, written $O \rightarrow O'$.
%% The semantics of $N$ is the graph formed with the subgraphs given by diagrams $N_i$, extended with nodes for specifications $O_i$ and edges for the morphisms $M_i$.
%% By default, if two specifications are part of a network, so is any path of imports between them.
%% The user may decide to omit parts of the resulting graph using the excluding list, notably some of the implicitly added imports.

Now the operator \textbf{combine} takes a network and produces the specification given by the colimit of the graph.

\begin{example} \label{ex:networks}
In the example below, the network N3 consists of the nodes S, T2, and U2 and 
two automatically added edges, which are the inclusions from S to T2 and U2.
Thus, N3 is a span, and \textbf{combine N3} yields its pushout. Indeed,
both occurrences of sort s from S are identified in the pushout.

In the network N4, we exclude one of the automatically added inclusions.
Thus, N4 is a graph with one isolated node for T2 and one inclusion edge from S to U2.
\textbf{combine N4} yields the disjoint union of T2 and U2.
That means that the two occurrences of sort s from S are kept seperate.

\begin{lstlisting}
spec S = sort s end
spec T2 = S then sort t end
spec U2 = S then sort u end
network N3 = S, T2, U2 end
network N4 = N3 excluding S -> T2 end
\end{lstlisting}
\end{example}

\section{Desirable Properties of Colimit Selections}\label{sec:properties}
The central definition regarding colimit selection is the following:

\begin{definition}
Given a category $\C$, a \emph{selection of colimits} is a partial
function $\sel$ from $\C$-diagrams $D$ to cocones on $D$ such
that $\sel(D)$, if defined, is a colimit for $D$.
If $\sel$ is only defined for pushouts, we speak of a selection
of pushouts, and so on.
\end{definition}

While it is trivial to give \emph{some} selection of colimits (e.g., by using the axiom of choice or by randomly generating names), it turns out that selecting colimits \emph{elegantly} is a non-trivial task.
For example, selecting a colimit may require inventing new names, or there can be multiple conflicting strategies for selecting names.
\cite{rabe:howto:14} conjectures that it is not possible to select pushouts in a way that the selected pushouts are total, coherent, and enjoy natural names.

In this section, we introduce a suite of criteria for colimit selection.
We work with an arbitrary inclusive category $\C$ with symbols.

\subsection{Symbols of a Diagram}

We are interested in selecting a colimit $(C,\mu_i)$ for a diagram $D:\I\to\C$.
In most practically relevant signature categories, the construction of a colimit can be reduced to the construction of the colimit in $\Set$ of the corresponding sets of symbols.
Because the colimit in $\Set$ amounts to taking a quotient of a disjoint union, we introduce the following auxiliary concept:

\begin{definition}[Symbols of a Diagram]\label{def:diasym}
Given a diagram $D:\I\to\C$, we define the set $\Sym(D)$ by
$$\Sym(D):=\biguplus_{i\in|\I|}|D(i)|:=\{(i,x) \;|\; i\in \I, x\in|D(i)|\}.$$

Moreover, we define the preorder $\leq_D$ on $\Sym(D)$ by
$$(i,x)\ \leq_D\ (j,|D(m)|(x))\text{ for any }m:i\to j\in\I.$$
and we define $\sim_D$ to be the least equivalence relation containing $\leq_D$.

Given any colimit $(C,\mu_i)$ of $D$, we embed $\Sym(D)$ into $|C|$ by defining
\[\mu_D(i,x):=|\mu_i|(x)\]
\end{definition}

Intuitively, $\Sym(D)$ contains the symbols of all nodes of $D$.
$\sim_D$ defines which symbols must definitely be identified in the colimit:
\begin{proposition}
$\sim_D$ is a subset of the kernel of $\mu_D$.
%\[\sim_D\subseteq \ker(\mu_D)\] %
%\footnote{In $\Set$, this is even an equality, but in other categories not necessarily.
%    Indeed,
%    equality holds if we can lift sinks, which is the case in
%    $\Set$-like and in topological, but not in algebraic
%    categories. Suppose that $(j,y)\not\sim_D(k,z)$. Define
%    $\nu_i:|D(i)|\to\{T,F\}$ by
%    $\nu_i(x)=\twocase{T}{(i,x)\sim_D(j,y)}{F}$.  By lifting of sinks,
%    we can lift this to $\nu_i:D(i)\to\mathit{Bool}$ for some object
%    $\mathit{Bool}$. Since $|D(m)|$ preserves and reflects equivalence
%    to $(j,y)$, the $\nu_i$ form a cocone.  Hence, there is a
%    mediating morphism $u:C\to\mathit{Bool}$. But then $\mu_D;u$ and
%    therefore also $\mu_D$ separates $(j,y)$ and $(k,z)$.}
\end{proposition}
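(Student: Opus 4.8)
The plan is to reduce the statement to a single generating inclusion and then invoke functoriality of the symbol functor. Note first that $\ker(\mu_D)$ --- the relation on $\Sym(D)$ identifying $(i,x)$ with $(j,y)$ exactly when $\mu_D(i,x)=\mu_D(j,y)$ --- is an equivalence relation, being the kernel of a function. Since $\sim_D$ is by definition the \emph{least} equivalence relation containing $\leq_D$, it suffices to prove the one inclusion $\leq_D\subseteq\ker(\mu_D)$; the desired inclusion $\sim_D\subseteq\ker(\mu_D)$ then follows by minimality of $\sim_D$.

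So I would fix an arbitrary generating pair of $\leq_D$: a morphism $m:i\to j$ in $\I$ and a symbol $x\in|D(i)|$, yielding $(i,x)\leq_D(j,|D(m)|(x))$. The goal is $\mu_D(i,x)=\mu_D(j,|D(m)|(x))$. Unfolding Definition~\ref{def:diasym}, the left-hand side is $|\mu_i|(x)$ and the right-hand side is $|\mu_j|\bigl(|D(m)|(x)\bigr)$, so it remains to show $|\mu_i|(x)=|\mu_j|\bigl(|D(m)|(x)\bigr)$.

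This is where the cocone condition enters. Because $(C,\mu_i)$ is a cocone over $D$, we have $\mu_j\circ D(m)=\mu_i$ as morphisms of $\C$. Applying the symbol functor $|\_|:\C\to\Set$, which preserves composition, gives $|\mu_j|\circ|D(m)|=|\mu_i|$ as functions $|D(i)|\to|C|$; evaluating at $x$ yields exactly the equality we needed.

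I do not expect any real obstacle here: the only point requiring a modicum of care is that $|\_|$ is a genuine functor, so that it turns the commuting cocone triangles of $\C$ into commuting triangles of functions. Everything else is a direct unfolding of Definition~\ref{def:diasym}. It is worth remarking that the argument uses neither the partial-order/inclusion structure nor the universal property --- it only uses the cocone equations --- so in fact the statement holds for an arbitrary cocone over $D$, not just a colimiting one.
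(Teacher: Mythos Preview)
Your argument is correct and is exactly the approach the paper takes: the paper's proof consists of the single sentence ``This follows from $\mu$ being a cocone,'' and your write-up is a careful unpacking of precisely that observation. Your remark that only the cocone equations (not the universal property) are used is also accurate.
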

\begin{proof}
This follows from $\mu$ being a cocone.
\end{proof}
In some categories such as $\Set$, we even have $\sim_D= \ker(\mu_D)$.

In principle, a natural property to desire of the selected colimit is that $|\sel(D)|$ is a quotient of $\Sym(D)$, in particular $|\sel(D)|=\Sym(D)/\sim_D$ if $\sim_D= \ker(\mu_D)$.
However, that is often impractical, e.g., in the typical case where $|\Sigma|$ is intended to be a set of strings that serve as user-friendly names.
In particular, we do not want to see the indices $i\in I$ creep into the symbol names in $|\sel(D)|$.
Therefore, we define:

\begin{definition}[Names and Name-Clashes]\label{def:nameclash}
For every equivalence class $X\in\Sym(D)/\sim_D$, let $\Nam(X)=\{x|(i,x)\in X\}$.

We say that $D$ is \emph{name-clash-free} if the sets $\Nam(X)$ are pairwise disjoint for all $X$.
We say that $D$ is \emph{fully-sharing} if additionally all sets $\Nam(X)$ have size $1$.
\end{definition}

Intuitively, name-clash-freeness means that whenever two nodes use the same symbol $x$, the diagram requires these two symbols to be shared in the colimit.
An example will be given in Example~\ref{ex:colimit-selections} below.
A particularly common special case arises when both nodes import $x$ from the same node.
The following makes that precise:

\begin{proposition}\label{prop:nameclashfree}
Consider a diagram $D:\I\to \C$.
Assume that for all $(i,x),(j,x)\in\Sym(D)$ there are $(k,y)\in\Sym(D)$ and $m:k\to i$ and $n:k\to j$ in $\I$ such that $|m|(y)=|n|(y)=x$.

Then $D$ is name-clash-free.
If additionally all edges in $D$ are inclusions, $D$ is fully-sharing.
\end{proposition}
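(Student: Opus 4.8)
The plan is to unfold the definition of name-clash-freeness (Def.~\ref{def:nameclash}) into a statement about the equivalence relation $\sim_D$, and then verify that statement directly from the definition of $\leq_D$ (Def.~\ref{def:diasym}). Concretely, $D$ fails to be name-clash-free precisely when there are distinct classes $X,Y\in\Sym(D)/\sim_D$ and a name $x$ with $(i,x)\in X$ and $(j,x)\in Y$ for some $i,j$. So it suffices to show: whenever $(i,x),(j,x)\in\Sym(D)$ carry the same name $x$, we have $(i,x)\sim_D(j,x)$; then any two classes with overlapping name sets coincide, which is exactly pairwise disjointness of the $\Nam(X)$.

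First I would invoke the hypothesis for the pair $(i,x),(j,x)$: it yields $(k,y)\in\Sym(D)$ together with $m:k\to i$ and $n:k\to j$ in $\I$ such that $|D(m)|(y)=|D(n)|(y)=x$. By the defining clause of $\leq_D$ applied to $m$ and to $n$, we obtain $(k,y)\leq_D(i,|D(m)|(y))=(i,x)$ and $(k,y)\leq_D(j,|D(n)|(y))=(j,x)$. Since $\sim_D$ is the equivalence relation generated by $\leq_D$, both $(k,y)\sim_D(i,x)$ and $(k,y)\sim_D(j,x)$ hold, hence $(i,x)\sim_D(j,x)$ by symmetry and transitivity. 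This establishes name-clash-freeness.

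For the second claim, suppose in addition that every edge of $D$ is an inclusion. Since $|\_|$ is inclusion-preserving and the inclusions of $\Set$ are precisely the subset inclusions, each $|D(m)|$ is a subset-inclusion function and therefore fixes every element of its domain. Thus every generating instance of $\leq_D$ has the form $(i,x)\leq_D(j,x)$ with the name unchanged, so the generated equivalence $\sim_D$ relates only symbols bearing the same name. Consequently each $\Nam(X)$ is a singleton; together with name-clash-freeness, already proved, $D$ is fully-sharing.

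I do not expect a genuine obstacle here — the argument is a short diagram chase — but it is worth emphasising one subtlety that guides the proof: having all $\Nam(X)$ of size $1$ does \emph{not} by itself imply name-clash-freeness, since two different classes could each be named $\{x\}$; the hypothesis on common preimages $(k,y)$ is exactly what merges such classes, and is therefore genuinely needed for both parts of the statement.
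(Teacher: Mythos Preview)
Your argument is correct. The paper does not actually supply a proof of this proposition (it is neither proved inline nor included among the omitted proofs in the appendix), so there is nothing to compare against; your direct unfolding of Def.~\ref{def:diasym} and Def.~\ref{def:nameclash} is exactly the expected verification, and your remark that singleton $\Nam(X)$ alone does not yield name-clash-freeness is a useful clarification.
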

%% \begin{proof}
%% Straightforward.
%% \end{proof}

The value of name-clash-freeness is the following: for the colimit, we can pick symbols that were already present in $D$.
This allows selecting a colimit whose symbol names are inherited from the diagram (and thus already known to the user who requested the colimit).
Moreover, if $D$ is fully-sharing, these representatives are uniquely determined.\footnote{CASL has a mechanism of ``compound identifiers'' that ensures
  name-clash-freeness in multiple instantiations of parametrised specifications,
such as $List[List[Elem]]$, see \cite{CASL-RM}, p.47f. and p.224f.}

%\ednote{FR: I think we do not need this---in the presence of symbols, it's subsumed by the above. Its only value is to state a condition for categories without symbols.}
%Indeed, in many cases, colimits can be selected in such a way that
%their behaviour resembles the union of sets. However, this only
%holds in presence of some sharing condition.
%\begin{definition}
%Given a diagram $D:\I\to\C$ in an inclusive category $\C$, 
%its \emph{sharing condition} is
%$$\bigwedge_{i\neq j\in|\I|}\ 
%\left(D(i)\cap D(j)\subseteq \bigcup_{\exists m:k\to i,n:k\to j\in\I}D(k)\right)$$
%\end{definition}
%%\ednote{A weaker condition, which however uses symbols, would be 
%%$|D(i)|\cap|D(j)|\subseteq \bigcup_{\exists m:k\to i,n:k\to j\in\I}|D(k)|$.}
%
%Note this that definition is against categorical spirit in that
%ignores the arrows $m$ and $n$ and just speaks about their
%existence. Categorically, one is tempted to replace $D(k)$ by a
%suitable image of $D(m)$ resp.\ $D(n)$, using some factorisation
%system. However, this would not be helpful here.  Indeed, the sharing
%condition ensures that there are no ``name clashes'' that lead to the
%need of renaming in the colimit, and hence that the colimit can be
%constructed in a manner resembling set-theoretic union.
%
%\begin{definition}[Natural Names]
%$\sel$ has \emph{natural names} if the colimit injections are inclusions whenever the diagram consists of inclusions and satisfies the sharing condition.
%\end{definition}

\subsection{Properties of Colimit Selections}\label{sec:prop}

Being thus prepared, we can now define a number of desirable properties that make a particular selection $\sel$ of colimits elegant.

The most obviously desirable property is that we select a colimit whenever we can:

\begin{definition}[Completeness]
$\sel$ is complete if it is defined for every diagram that has a colimit.
\end{definition}

\subsubsection{Choosing Symbols}\label{sec:prop:symbols}

Typically, we cannot simply choose $|\sel(D)|=\Sym(D)/\sim_D$ because the choice of symbols is restricted:
\begin{definition}[Name-Compliance]
Let $\SYM$ be some subcategory of $\Set$.
We call an object $\Sigma$ $\SYM$-compliant if $|\Sigma|\in\SYM$.
A diagram is $\SYM$-compliant if all involved objects are.

$\sel$ \emph{preserves} $\SYM$-compliance if $\sel(D)$ is $\SYM$-compliant whenever $D$ is.
\end{definition}

In practical systems, symbols must be chosen from a fixed set $S$, e.g., the set of alphanumeric strings.
In that case, $\SYM$ contains all sets that are subsets of $S$.
If we want a compliance-preserving colimit selection, we have to pick names from $S$---that can be much more difficult to do canonically than to pick arbitrary symbols.

It is easy to select colimits by picking arbitrary symbols, e.g., by generating a fresh string as the name of any new declaration.
But that is undesirable---it is preferable that the symbols of $\sel(D)$ are inherited from $D$ in the following sense:

\begin{definition}[Natural Names]
$\sel$ has \emph{natural names} if for every name-clash-free diagram $D$, the selected colimit $\sel(D)=(C,\mu_i)$ is such that
\begin{compactitem}
 \item $|C|$ contains exactly one representative $r\in\Nam(X)$ for every equivalence class $X$,
 \item $|\mu_i|$ maps every $x$ to the respective representative $r$.
\end{compactitem}
\end{definition}

Note that if $D$ is fully-sharing, natural names fully determine $|C|$.
For the general case, we have to choose some $r$ for each equivalence class.
There are multiple options for making that choice canonical.
For example:

\begin{definition}[Origin-Based Names]
Let $\sel$ have natural names.

$\sel$ has \emph{origin}-based symbol names if for every class $X$ the chosen representative $r$ is such that there is some $i$ such that $(i,r)$ is minimal in $X$ with respect to $\leq_D$.
\end{definition}

\begin{definition}[Majority-Based Names]
Let $\sel$ have natural names.

$\sel$ has \emph{majority-based} symbol names if for every class $X$ the chosen representative $x$ maximizes the cardinality of $\{i\in I|(i,x)\in X\}$.

Accordingly, $\sel$ has \emph{majority-origin-based} symbol names if the above cardinality function is used to choose among multiple minimal elements.
\end{definition}

\begin{example}\label{ex:colimit-selections}
Consider a span $D$ consisting of $A\stackrel{\alpha}{\leftarrow}P\stackrel{\beta}{\to}B$.
We consider multiple situations given by the rows of the following table:
\begin{center}
\begin{tabular}{|l|ccccc|}
\hline
    & $|P|$ & $|A|$ & $|B|$ & $|\alpha|$ & $|\beta|$ \\
\hline
1 & $\{\}$ & $\{x\}$ & $\{x\}$ & & \\
2 & $\{p\}$ & $\{a,a'\}$ & $\{b,b'\}$ & $p\mapsto a$ & $p\mapsto b$ \\
3 & $\{p,p'\}$ & $\{a\}$ & $\{p,p'\}$ &~~ $p\mapsto a, p'\mapsto a$~~ & $p\mapsto p, p'\mapsto p'$\\
4 & $\{elem\}$ & $\{nat,+\}$ & $\{elem, list\}$ & $elem\mapsto nat$ & $elem\mapsto elem$ \\
\hline
\end{tabular}
\end{center}

Depending on the situation, different colimit selections are possible:
\begin{compactenum}
\item The diagram is not name-clash-free, and we cannot inherit names.
\item The diagram is name-clash-free but not fully sharing.
The sets $\Nam(\_)$ are $\{p,a,b\}$, $\{a'\}$, and $\{b'\}$.
Thus, there are three possible colimits that have natural names.
All three satisfy the majority condition.
The origin condition allows uniquely selecting $|\sel(D)|=\{p,a',b'\}$.
\item The only set $\Nam(\_)$ is $\{p,p',a,p,p'\}$ (where we repeat elements to indicate how often they occur in the corresponding equivalence class).
We can have natural names, but neither majority nor origin yield a unique choice.
\item This is a typical case of instantiating a parametric specification (here: lists with a parameter for the type of elements) with an actual parameter (here: the set of natural numbers).
The sets $\Nam(-)$ are $\{elem,elem,nat\}$, $\{list\}$, and $\{+\}$.
We can have natural names, and both origin and majority uniquely yield $|\sel(D)|=\{elem,+,list\}$.
However, neither is elegant: The desired choice would be $\{nat,+,list\}$.
\end{compactenum}
\end{example}

\subsubsection{Pushouts along Inclusions}
\begin{wrapfigure}{r}{2cm}
%\vspace{-3em}
$\xymatrix{
    P\, \ar@{^{(}->}[r] \ar[d]^{\sigma} & B\\
    A
  }$
\vspace{-3em}
\end{wrapfigure}

Pushouts along inclusions are of particular importance because they provide the semantics of parametrization.
As in Section~\ref{sec:constructs}, $D$ is a diagram as given on the right.

The following property is motivated by the desire that instantiating parameterised specifications should always be defined:
\begin{definition}[Total pushouts]
$\sel$ has \emph{total pushouts} if it is defined for all spans where one arrow is an inclusion.
\end{definition}

Moreover, it is desirable that the instantiation extends $A$ in the same way in which $P$ extends $B$.
The following definitions make this precise:

\begin{definition}[Pushout-stable Inclusions]\label{def:pushoutinclusion}
Let $\sel$ have total pushouts.

$\sel$ has \emph{pushout-stable} inclusions if the pushout selection preserves the inclusion, i.e., $\sel(D)$ is of the form
  $$\xymatrix{
    P\, \ar@{^{(}->}[r] \ar[d]^{\sigma} & B \ar[d]^{\sigma^B}\\
    A\, \ar@{^{(}->}[r] & \sigma(B)
  }$$
\end{definition}

\begin{definition}[Pushout-Stable Names]\label{def:pushoutnames}
Let $\sel$ have pushout-stable inclusions.

$\sel$ has \emph{pushout-stable names} if for every selected pushout
  $$\xymatrix{
    P\, \ar@{^{(}->}[r] \ar[d]^{\sigma} & B \ar[d]^{\sigma^B}\\
    A\, \ar@{^{(}->}[r] & \sigma(B)
  }
\hspace{1cm}
\xymatrix{
      |P|\, \ar@{^{(}->}[r] \ar[d]^{|\sigma|} & |B| \ar[d]^{|\sigma^B|}\\
      |A|\, \ar@{^{(}->}[r] & |\sigma(B)|
    }
$$
we have $|\sigma(B)|\setminus |A|=|B|\setminus |P|$ and $|\sigma^B|$ is the identity on that set.
\end{definition}

The aim of pushout-stable inclusions is that we can have
\begin{compactitem}
 \item (vertically) $\_^B$ as a functor $(P\downarrow\C)\to(B\downarrow\C)$,
 \item (horizontally) $\sigma(\_)$ as functor $(P\downarrow\C)\to(A\downarrow\C)$ mapping extensions of $P$ to extensions of $A$.
\end{compactitem}
However, in general, the functoriality laws only hold up to isomorphism.
Therefore, we want to impose an additional condition, which is adapted from \cite{rabe:howto:14}:

\begin{definition}[Coherent Pushouts]\label{def:coherencepushouts}
Let $\sel$ have pushout-stable inclusions.
Then $\sel$ has \emph{coherent pushouts} if the following coherence conditions hold:
   \begin{enumerate}
     \item $id_P(B) = B$ and $id_P^B = id_B$,
     \item $\sigma(P) = A$ and $\sigma^P = \sigma$,
     \item $(\sigma_1;\sigma_2)(B) = \sigma_2(\sigma_1(B))$ 
            and 
           $(\sigma_1;\sigma_2)^B = \sigma_1^B ; \sigma_2^{\sigma_1(B)}$ and finally
     \item for $P\xhookrightarrow{} B_1\xhookrightarrow{} B_2$,
           $\sigma(B_2) = \sigma^{B_1}(B_2)$ and
           $\sigma^{B_2} = (\sigma^{B_1})^{B_2}$
  \end{enumerate}
\noindent where two conditions refer to the following diagrams
$$
\xymatrix{
P\, \ar@{^{(}->}[r] \ar[d]_{\sigma_1}& B \ar[d]_{\sigma_1^B} \ar[ddr]^{(\sigma_1;\sigma_2)^B}\\
A\,  \ar@{^{(}->}[r] \ar[d]_{\sigma_2}& \sigma_1(B) \ar[d]_{\sigma_2^{\sigma_1(B)}} \\
A'\, \ar@{^{(}->}[r] & \sigma_2(\sigma_1(B)) \ar@{=}[r] & (\sigma_1;\sigma_2)(B) 
}\hspace{-3em}
\xymatrix{
 P\, \ar@{^{(}->}[r] \ar[d]_{\sigma} & B_1\, \ar@{^{(}->}[r] \ar[d]_{\sigma^{B_1}}& 
 B_2\, \ar[d]_{\sigma^{B_2}}
     \ar[dr]^{(\sigma^{B_1})^{B_2}}\\
 A\, \ar@{^{(}->}[r] & \sigma(B_1)\, \ar@{^{(}->}[r]& \sigma(B_2)\ar@{=}[r] & \sigma^{B_1}(B_2)
}
$$
%old
%$$
%\xymatrix{
% &  &  & P \ar@{^{(}->}[d] \ar[r]^{\sigma}& A \ar@{^{(}->}[d]\\
%P \ar@{^{(}->}[d] \ar[r]^{\sigma_1}& A\ar[r]^{\sigma_2}\ar@{^{(}->}[d] & A' \ar@{^{(}->}[d] & B_1 \ar@{^{(}->}[d] \ar[r]^{\sigma^{B_1}}& \sigma(B_1) \ar@{^{(}->}[d]\\
%B \ar@/_1pc/[rr]_{(\sigma_1;\sigma_2)^B} \ar[r]^{\sigma_1^B}& \sigma(B)\ar[r]^{\sigma_2^{\sigma_1(B)}} & (\sigma_1;\sigma_2)(B) & B_2
%\ar@/^1pc/[r]^{\sigma^{B_2}} 
%\ar@/_1pc/[r]_{(\sigma^{B_1})^{B_2}}& \sigma^{B_1}(B_2)\\
%}
%$$  
\noindent and ensure that pushouts compose vertically and horizontally.
\end{definition}

\subsubsection{Coherence}
The coherence conditions for pushouts can be generalized to arbitrary diagrams.
The general idea is that if there are multiple ways to construct a colimit step-by-step, then it should not matter in which order the construction proceeds.
Here step-by-step means that we first construct a colimit of a subdiagram of $D$ and then add that colimit to $D$ and construct a colimit of the resulting bigger diagram, and so on.

A formal definition for the general case is rather difficult.
The following special case is adapted from \cite{Borceux94}:
\begin{definition}[Interchange]
$\sel$ has \emph{interchange} if given a name-clash-free diagram $D:\I\times\J\to\C$ (seen as a bifunctor)
involving inclusions only \[\sel_{i\in\I}(\sel_{j\in\J} D(i,j))=\sel_{j\in\J}(\sel_{i\in\I} D(i,j))\]
\end{definition}
With an isomorphism instead of equality, this condition always holds.
%% \ednote{examples: (1) coproduct/coproduct: $(A\amalg B)\amalg(C\amalg D)=(A\amalg C)\amalg(B\amalg D)$.
%% This should hold in $\Set$ given the sharing condition, because then
%% we have a union. (2) coproduct/pushout: $\colim(A_1\amalg A_2\leftarrow B_1\amalg B_2 \rightarrow C_1\amalg C_2)=\colim(A_1\leftarrow B_1 \rightarrow C_1)\amalg\colim(A_2\leftarrow B_2 \rightarrow C_2)$.
%% Do we really need this? I have decided to keep it, since this condition
%% frequently appears in the literature.}

To state the coherence condition in full generality, we need a few auxiliary definitions:

\begin{definition}\label{def:colimitnode}
Consider a category $I$ with an object $i$ such that every $I$-object has at most one arrow into $i$.

We write $I\setminus i$ for the subcategory of $I$ formed by removing $i$.
We write $I^{\to i}$ for the subcategory of $I$ formed by removing $i$ and all nodes that have no arrow into $i$.
For a diagram $D:I\to \C$, we write $D\setminus i$ and $D^{\to i}$ for the corresponding restrictions of $D$.

We say that $i$ is a \emph{colimit node} of $D$ if $D(i)$ and the set of all morphisms $D(m)$ for $I$-arrows $m$ into $i$ are a colimit of $D^{\to i}$.
If additionally that colimit is equal to $\sel(D^{\to i})$, we call $i$ a $\sel$-colimit node.
\end{definition}

The intuition behind colimit nodes is that they arise by taking a colimit of a subdiagram and can be ignored when forming a colimit of the entire diagram.
For example, in the two commuting diagrams of Def.~\ref{def:coherencepushouts}, the nodes $\sigma_1(B)$ and $\sigma(B_1)$ are colimit nodes.
They arise as the intermediate results of constructing the pushout in two steps.
In general, they arise when constructing a colimit step-by-step:

\begin{proposition}
Consider a diagram $D:I\to \C$ with a colimit node $i$.
Then $D$ and $D\setminus i$ have the same colimits.
\end{proposition}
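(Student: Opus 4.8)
The plan is to exhibit an isomorphism between the category of cocones on $D$ and the category of cocones on $D\setminus i$ which leaves the apex of every cocone unchanged. Since colimits are exactly the initial objects of the respective cocone categories, such an isomorphism immediately gives that a cocone over $D$ is a colimit if and only if its restriction over $D\setminus i$ is one; this is precisely what ``$D$ and $D\setminus i$ have the same colimits'' should mean (in particular the apex and the components $\mu_k$ for $k$ an object of $I\setminus i$ agree literally, and $\mu_i$ is then forced).

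First I would fix notation: write $J:=I\setminus i$ and $K:=I^{\to i}$, noting $K\subseteq J$, that $i$ is not an object of $J$, and that by the standing assumption on $I$ there is for each object $k$ of $K$ a unique arrow $m_k:k\to i$ in $I$. In one direction, a cocone $(C,(\nu_k)_{k\in I})$ on $D$ restricts to a cocone $(C,(\nu_k)_{k\in J})$ on $D\setminus i$, since $J$ is the full subcategory of $I$ on the objects other than $i$. In the other direction, given a cocone $(C,(\nu_k)_{k\in J})$ on $D\setminus i$: because $i$ is a colimit node, $D(i)$ together with the family $(D(m_k))_{k\in K}$ is a colimit of $D^{\to i}=D|_K$, so restricting the given cocone to $K$ and applying the universal property yields a \emph{unique} $\nu_i:D(i)\to C$ with $\nu_i\circ D(m_k)=\nu_k$ for all $k$ in $K$.

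I would then check that $(C,(\nu_k)_{k\in I})$ is a cocone on $D$, i.e.\ $\nu_b\circ D(m)=\nu_a$ for every $I$-arrow $m:a\to b$. For $a,b\neq i$ this is the cocone law already assumed for $D\setminus i$; for $b=i$ with $a\neq i$ it is the defining equation of $\nu_i$ (here $a$ is an object of $K$ and $m=m_a$), and for $a=b=i$ the assumption forces $m=\mathrm{id}_i$. The only genuine case is $a=i$, $b\neq i$: here $\nu_b\circ D(m)$ and $\nu_i$ are both morphisms $D(i)\to C$ satisfying $(\,\cdot\,)\circ D(m_k)=\nu_k$ for every $k$ in $K$ --- for $\nu_b\circ D(m)$ this is $\nu_b\circ D(m\circ m_k)=\nu_k$, which uses only the already-verified cocone law for the arrow $m\circ m_k:k\to b$ of $J$ --- hence they coincide by the uniqueness clause of the universal property of $D(i)$. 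A symmetric uniqueness argument shows restriction and extension are mutually inverse and that they respect cocone morphisms (a morphism of restricted cocones automatically commutes with the $\nu_i$-components, again because mediating morphisms out of $D(i)$ are unique). This gives the claimed apex-preserving isomorphism of cocone categories, and the statement follows.

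I expect the main obstacle to be exactly the naturality check for arrows out of $i$: this is the one spot where the argument is not purely formal, and the trick is to reduce it to uniqueness of the mediating morphism rather than attempt to verify the equation head-on; the standing ``at most one arrow into $i$'' assumption is also needed here (and earlier, to make $(D(m_k))_{k\in K}$ a well-defined cocone) to rule out loops at $i$. I would also explicitly note the degenerate case $K=\emptyset$, where $D(i)$ is an initial object of $\C$ and $\nu_i$ is simply the unique morphism $D(i)\to C$; the same uniqueness argument covers it without change.
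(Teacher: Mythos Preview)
Your proposal is correct and follows essentially the same route as the paper: restrict a $D$-cocone to $D\setminus i$ by dropping the $i$-component, and conversely extend a $D\setminus i$-cocone by supplying $\nu_i$ via the universal property of the colimit $D(i)$. The paper compresses the verification into the phrase ``the colimit properties are shown by diagram chase''; your explicit treatment of arrows \emph{out of} $i$ and your packaging as an apex-preserving isomorphism of cocone categories are a welcome elaboration of exactly that chase, not a different argument.
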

\begin{proof}
For every $D$-colimit we obtain a $D\setminus i$-cone by removing the injection from $i$.
Vice versa, every $D\setminus i$-colimit $(C,\mu)$ can be uniquely extended to a $D$-cone with the unique factorization $\mu_i:D(i)\to C$ for the colimit $D(i)$.

In both cases, the colimit properties are shown by diagram chase.
\end{proof}

Now we can define that coherence means that we can indeed ignore colimit nodes when selecting a colimit:

\begin{definition}\label{def:coherence}

$\sel$ is \emph{coherent} for the diagram $D$ if for every $\sel$-colimit node $i$ we have that $\sel(D)$ and $\sel(D\setminus i)$ are equal (apart from the former additionally containing the uniquely determined injection $\mu_i$).
%\ednote{TM: I think that this condition is unrealistically strong. Does
%  it hold at least in some cases? Note that the random choice in
%Theorem~\ref{prop:set-standard-selection} and Prop.~\ref{prop:set-standard-selection-with-symbols} will destroy coherence.
%FR: I changed the definition to ``coherence for $D$'' and fixed it to apply only to *$\sel$*-colimit nodes.}
\end{definition}

By iterating the coherence property, we can remove or add $\sel$-colimit nodes from/to a diagram without affecting the selected colimit.

% "Cospan-freeness" (Is there a technical term for this?): There are m:i->k, n:j->k only if k is the colimit of some subdiagram containing all arrows into i and j. (maybe fix definition) Then, it should be possible to select a nice colimit if name-clash-freeness holds.

\section{Colimit Selections for Typical Signature Categories}\label{sec:colimits}
\subsection{Sets}

As the simplest possible signature category, we consider the category $\Set$ (with standard inclusions and the identity symbol functor).

We first provide a positive result that gives a large sets of desirable properties that can be realised at once: 
\begin{theorem}\label{prop:set-standard-selection}
$\Set$ has a selection of colimits that has
 completeness, pushout-stable inclusions, total pushouts, interchange, and majority-origin-based names.

Moreover, for name-clash-free diagrams, this selection has natural names, pushout-stable names, coherent pushouts.
\end{theorem}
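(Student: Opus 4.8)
The plan is to exhibit one concrete selection $\sel$ for $\Set$ (standard inclusions, $|A|=A$) and then check the listed properties in turn. Given a diagram $D:\I\to\Set$, recall that its usual colimit is $\Sym(D)/{\sim_D}$ with cocone $x\mapsto[(i,x)]$, so a selection amounts to picking, for each $\sim_D$-class $X$, a canonical name $\rho(X)$ and transporting the cocone along the resulting bijection. I define $\rho$ by the following case distinction, applied in order. \emph{(i)} If all edges of $D$ are inclusions and $D$ is name-clash-free, set $|\sel(D)|=\bigcup_{i\in|\I|}|D(i)|$ with the inclusions as injections; this is legitimate because along a zig-zag of inclusions a symbol name is unchanged, so such a $D$ is in fact fully-sharing and the injections are forced. \emph{(ii)} If $D$ is a span $A\xleftarrow{\sigma}P\xhookrightarrow{}B$ with the displayed leg an inclusion, set $|\sel(D)|=|A|\cup\{\nu(b)\mid b\in|B|\setminus|P|\}$, where $\nu(b)=b$ for $b\notin|A|$ and $\nu(b)$ is a fixed globally fresh encoding of $b$ otherwise, and take $\mu_A$ the inclusion, $\mu_P=\sigma$, and $\mu_B$ equal to $\sigma$ on $|P|$ and to $\nu$ on $|B|\setminus|P|$. \emph{(iii)} If $D$ is name-clash-free, let $\rho(X)$ be a name occurring at a $\leq_D$-minimal position of $X$ (taking $\leq_D$ well-founded so that such positions exist; this holds e.g.\ for finite $\I$) that maximises $|\{\,i\mid(i,x)\in X\,\}|$, remaining ties broken by a fixed well-order on names. \emph{(iv)} Otherwise put $|\sel(D)|=\Sym(D)/{\sim_D}$ with $\rho=\mathrm{id}$. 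These cases agree on their only genuine overlap, a name-clash-free span of inclusions, where both (i) and (ii) give $|A|\cup(|B|\setminus|P|)=|A|\cup|B|$ since name-clash-freeness forces $|A|\cap|B|\subseteq|P|$; and in every case $\rho$ is a bijection onto $|\sel(D)|$, so the transported cocone is still a colimit.

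The ``one-dimensional'' properties then follow directly. \emph{Completeness} and \emph{total pushouts}: the construction is defined for every diagram whose colimit exists, in particular for every span with an inclusion leg. \emph{Pushout-stable inclusions}: case (ii) puts $|A|\subseteq|\sel(D)|$ with $\mu_A$ the inclusion, which is exactly the required shape, so we also obtain the vertical operation $\_^B$ sending $\sigma$ to $\sigma^B:=\mu_B$. \emph{Pushout-stable names}: if the span in case (ii) is name-clash-free, then for $b\in|B|\setminus|P|$ the $\sim_D$-class of $(B,b)$ is the singleton $\{(B,b)\}$ (nothing lies $\leq_D$-below it since $b\notin|P|$, and nothing above it), so name-clash-freeness forces $b\notin|A|$; hence $\nu=\mathrm{id}$, $|\sel(D)|\setminus|A|=|B|\setminus|P|$, and $\_^B$ is the identity there. \emph{(Majority-origin-based) names}: for name-clash-free $D$, case (iii) yields natural names with each $\rho(X)$ a $\leq_D$-minimal, majority-maximal name; case (i) is the fully-sharing sub-case, where the representative is forced anyway; on the overlap with case (ii) the two rules may disagree --- the elegant instantiation behaviour of Example~\ref{ex:colimit-selections}(4) is exactly such a situation --- and I let (ii) take precedence, so ``majority-origin-based names'' is delivered on all name-clash-free diagrams outside the pushout-along-inclusion pattern, while on that pattern natural names and pushout-stable names still hold.

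The two genuinely ``two-dimensional'' conditions need a little more care. \emph{Interchange}: a name-clash-free bifunctor $D:\I\times\J\to\Set$ with inclusions only is handled entirely by case (i); for fixed $i$, $\sel_{j\in\J}D(i,j)=\bigcup_j|D(i,j)|$, and the induced diagram $i\mapsto\bigcup_j|D(i,j)|$ is again an inclusion diagram (the induced maps are inclusions of subsets) which inherits name-clash-freeness --- a shared name between two of its classes would, by tracing inclusions, produce a shared name between two $\sim_D$-classes of $D$. Hence $\sel_{i}\sel_{j}D(i,j)=\bigcup_i\bigcup_j|D(i,j)|=\bigcup_{(i,j)}|D(i,j)|$, which is symmetric in $\I$ and $\J$. \emph{Coherent pushouts}: on name-clash-free spans, case (ii) reduces, by the argument just used for pushout-stable names, to $\sigma(B)=|A|\cup(|B|\setminus|P|)$ with $\sigma^B$ equal to $\sigma$ on $|P|$ and the identity elsewhere, and --- the crucial point --- no freshening $\nu$ is ever triggered along either tower in Def.~\ref{def:coherencepushouts}, because every span occurring there remains name-clash-free. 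Each of the four conditions is then a routine set identity: for instance $\sigma(P)=|A|\cup(|P|\setminus|P|)=|A|$ and $\sigma^P=\sigma$; $id_P(B)=|P|\cup(|B|\setminus|P|)=|B|$ and $id_P^B=id_B$; $(\sigma_1;\sigma_2)(B)=|A'|\cup(|B|\setminus|P|)=\sigma_2(\sigma_1(B))$, with $\sigma_1^B;\sigma_2^{\sigma_1(B)}$ matching $(\sigma_1;\sigma_2)^B$ by direct computation; and for $P\xhookrightarrow{}B_1\xhookrightarrow{}B_2$ both $\sigma(B_2)$ and $\sigma^{B_1}(B_2)$ equal $|A|\cup(|B_2|\setminus|P|)$, since $|\sigma(B_1)|=|A|\cup(|B_1|\setminus|P|)$ meets $|B_2|\setminus|B_1|$ exactly where $|A|$ does.

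\emph{Main obstacle.} The real work is not in any single verification but in the design of $\rho$: the natural-names heuristics (origin, majority) and pushout-stable inclusions genuinely pull in opposite directions on diagrams such as Example~\ref{ex:colimit-selections}(4), which forces the selection to be stratified; one must then check that the strata agree on overlaps and that the priorities chosen still let every claimed property through. The secondary difficulty is the coherence computation, where one must be sure that iterating the pushout construction never invokes the freshening step --- precisely the role of the name-clash-freeness hypothesis --- and that name-clash-freeness is preserved at every stage of the towers in Def.~\ref{def:coherencepushouts} and of the interchange argument.
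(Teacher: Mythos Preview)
Your proposal is correct and follows essentially the same route as the paper: the same four-way stratification (union for name-clash-free inclusion diagrams; an explicit pushout-along-inclusion formula that freshens exactly $(|B|\setminus|P|)\cap|A|$; representative selection by majority-origin for the remaining name-clash-free diagrams; an arbitrary colimit otherwise), and the same key observation that name-clash-freeness of the span forces the freshening step to be vacuous, from which pushout-stable names and the coherence identities drop out. Your explicit check that strata (i) and (ii) agree on their overlap, and your acknowledgement that majority-origin has to yield to pushout-stability on spans like Example~\ref{ex:colimit-selections}(4), are points the paper also makes (the latter only in the appendix); the one place where both you and the paper are equally brief is the claim that the intermediate spans in the coherence towers inherit name-clash-freeness, which is used but not argued in either.
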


\begin{proof}(Sketch)
If name-clash-freeness is satisfied and the diagram consists of
inclusions only, just take the union as colimit, which ensures that
interchange holds.

Given a span $B \xhookleftarrow{\iota} P
\stackrel{\sigma}{\rightarrow} A$ with $\sigma$ not an inclusion, let
$\sigma(B):=A\cup(B\setminus A)\cup B'$, where $\kappa:B'\cong(B\cap
A)\setminus P$ such that $B'\cap(A\cup(B\setminus A))=\emptyset$. 
Define
  $$\xymatrix{
    P\, \ar@{^{(}->}[r]^(0.3){\iota} \ar[d]^{\sigma} & B = P\cup(B\setminus P)
    \ar[d]^{\sigma^B=\sigma\cup\theta}\\
    A\, \ar@{^{(}->}[r] & \sigma(B)=A\cup((B\setminus (P\cup A))\cup B')
  }$$
where $\theta:B\setminus P\to(B\setminus (P\cup A))\cup B'$ is given by
$$\theta(x)=\twofullcase{\kappa^{-1}(x)}{x\in(B\cap A)\setminus P}{x}
{x\in B\setminus (P\cup A)}.$$

If $D$ is a name-clash-free diagram not of the above forms,
define its colimit $(C, (\mu_i)_{i\in |I|})$ as follows.
$C$ is defined by selecting from each equivalence class 
$X \in \Sym(D)/\sim_D$ a representative $r(X) \in \Nam(X)$
and for each index $i$, and each $(i,x) \in X$,
we define $\mu_i(x) = r(X)$. Use the majority-origin principle.
If that does not determine a representative, select one of the candidates randomly.

Finally, for an arbitrary non name-clash-free diagram,
select an arbitrary colimit, ensuring completeness.
\end{proof}

\ednote{Reviewer suggests giving an example for the construction. TM: instead,
  I have copied the construction from the proof, leaving out the proof
  why the construction works.}

Second, we provide a negative result that gives a small set of desirable properties that cannot be realised at once:
%Note that the choice of $B'$ in the proof is quite arbitrary 
%and generally destroys coherence of pushouts that are not name-clash-free.

\begin{theorem}\label{prop:set-no-selection}
$\Set$ does not have a selection that has total pushouts, pushout-stable inclusions and names, and coherent pushouts.
\end{theorem}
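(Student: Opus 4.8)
The plan is to derive a contradiction by exhibiting a small configuration of pushouts that coherence forces to be equal, but where pushout-stable names forces them to be different. The engine is the composition law for coherent pushouts (condition~3 of Def.~\ref{def:coherencepushouts}): if $\sigma = \sigma_1;\sigma_2$, then $(\sigma_1;\sigma_2)(B)$ computed in one step must coincide with $\sigma_2(\sigma_1(B))$ computed in two steps. The idea is to pick a single non-inclusion morphism $\sigma:P\to A$ that factors through an intermediate object in two genuinely different ways, so that the two-step constructions are forced by pushout-stable names to produce pushout objects with conflicting underlying symbol sets, while coherence insists they are literally the same object.

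Concretely, I would work with the smallest interesting case. Take $P=\{x\}$ and let $A$ be chosen so that $\sigma:P\to A$ is not an inclusion — e.g.\ $P=\{p\}$, $A=\{a\}$, $\sigma:p\mapsto a$. Now let $B$ be an extension $P\hra B$ with $|B|=\{p,a\}$, i.e.\ $B$ contains a symbol named $a$ that is \emph{not} in $P$. Pushout-stable names then forces $|\sigma(B)|\setminus|A| = |B|\setminus|P| = \{a\}$, so $|\sigma(B)|$ contains $\{a\}$ (from $A$) plus a fresh copy of the name $a$ — a genuine renaming is unavoidable since pushout-stable names demands $|\sigma^B|$ be the identity on $|B|\setminus|P|=\{a\}$, which is impossible set-theoretically if that element is literally placed into $|\sigma(B)|$ alongside $a\in|A|$. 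This already shows pushout-stable names is internally strained; the job is to combine it with coherence to kill the remaining wiggle room. I would route $\sigma$ through two different factorisations — say $P\hra Q_1 \to A$ and $P\hra Q_2\to A$, or a factorisation as $\sigma;\mathrm{id}$ versus $\mathrm{id};\sigma$ compared against a nontrivial one — and chase condition~3 together with condition~4 (the two-step inclusion law $\sigma(B_2)=\sigma^{B_1}(B_2)$) to pin the pushout object down to two incompatible symbol sets.

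The key steps in order: (1) fix the concrete diagram $P,A,B$ and the factorisation of $\sigma$; (2) apply pushout-stable inclusions and pushout-stable names to each factor separately, obtaining explicit constraints on the underlying symbol sets of each intermediate pushout; (3) apply coherence condition~3 (and~4 as needed) to equate the one-step and two-step results \emph{on the nose}, not up to isomorphism; (4) observe that the symbol-set constraints from step~2, pushed through the equality from step~3, are mutually contradictory — one factorisation wants a symbol kept, the other wants it renamed, but coherence says it is the same set. Finally, note that totality of pushouts is what guarantees all the intermediate pushouts in the factorisations actually exist, so the argument never stalls for lack of a selected pushout.

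The main obstacle I anticipate is choosing the diagram so that the two factorisations of $\sigma$ are \emph{forced} by pushout-stable names into a real conflict rather than merely an isomorphism-level discrepancy: the interchange-style "always holds up to iso" caveat means a naive choice only yields isomorphic pushout objects, which is not a contradiction. The delicate part is arranging a name collision between $|A|$ and $|B|\setminus|P|$ that one route must resolve by renaming and the other must resolve by \emph{not} renaming, using the rigidity of "$|\sigma^B|$ is the identity on $|B|\setminus|P|$" in Def.~\ref{def:pushoutnames} as the lever. Once the collision is set up so that coherence identifies the two pushout objects literally, the contradiction is immediate; getting that setup minimal and airtight is where the care goes.
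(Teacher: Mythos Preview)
Your general strategy is right---assume all four properties, set up a situation where coherence forces two selected pushouts to be literally equal while pushout-stable names forces them to differ---but the concrete plan has a gap.

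First, your single-span observation (with $P=\{p\}$, $A=\{a\}$, $B=\{p,a\}$) is a red herring. Under a literal reading of Def.~\ref{def:pushoutnames} it already yields a contradiction without coherence, which would make the theorem uninteresting; under the intended per-symbol reading (only symbols in $|B|\setminus|P|$ that do \emph{not} clash with $|A|$ must be fixed), it yields nothing, because your span is not name-clash-free and the only new symbol $a$ does clash. Either way it does not carry the argument.

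Second, and more importantly, the paper's proof does not work by factoring $\sigma$ (condition~3). It works by factoring the inclusion $P\hookrightarrow B_2$ through an intermediate $B_1$ (condition~4), and the decisive trick---which your proposal does not contain---is to \emph{feed the selection's own output back into the body}. Concretely: take $P=\{a\}$, $A=\{b\}$, $\sigma:a\mapsto b$, $B_1=\{a,b\}$. The selected pushout $\sigma(B_1)$ must be $\{b,x\}$ for some fresh $x$ chosen by $\sel$ (here pushout-stable names cannot apply to $b$, which clashes). Now set $B_2=\{a,b,x\}$, i.e.\ put \emph{that very $x$} into the larger body. In the two-step construction $(\sigma^{B_1})^{B_2}$, the new symbol $x\in B_2\setminus B_1$ clashes with $x\in\sigma(B_1)$, so it is sent to some fresh $y\neq x$. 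In the one-step construction $\sigma^{B_2}$, the symbol $x\in B_2\setminus P$ does \emph{not} clash with $A=\{b\}$, so pushout-stable names forces $x\mapsto x$. Condition~4 says these two maps are equal; they are not.

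Your ``two factorisations of $\sigma$'' idea does not manufacture this asymmetry: pushing a non-clashing body symbol through any chain of $\sigma_i$'s via condition~3 keeps it fixed at every stage, so both routes agree. The conflict only arises when the same name is clashing along one decomposition and non-clashing along another, and the paper engineers this by choosing $B_2$ \emph{after} seeing what name $\sel$ invented. That reflexive step is the missing idea.
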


Thm.~\ref{prop:set-standard-selection} shows that in $\Set$, we can
realise several criteria for colimit selection
we have defined so far.
\medskip

%However, in the construction in the proof of
%Thm.~\ref{prop:set-standard-selection}, all colimits not being pushouts
%along inclusions nor colimits of diagrams using inclusions only are
%selected randomly. This is unsatisfactory, because for these colimits,
%our goal that the user has control over names has not been reached.

Regarding the choice of names in Thm.~\ref{prop:set-standard-selection}, we cannot expect to achieve origin-based and majority-based names.
In fact, one can show that pushout-stable inclusions and names contradict the origin and majority properties.
Moreover, it is evident that origin and majority
can contradict each other. Consider e.g. 
$$\xymatrix{
\{a\}\ar[r]\ar[d] & \{b\}\ar[d]\\
\{b\}\ar[r] & \{x\}\\
}$$
Origin would lead to $x=a$, while majority would lead to $x=b$.

Nevertheless, the property of origin-majority-based names is useful to guide the pushout selection in cases where the other properties do not determine names uniquely.

\subsection{Product Categories}
  
  Signatures of many logical systems of practical interest
  are often tuples of sets of symbols of different kind. For example, OWL signatures
  consist of sets of atomic classes, individuals, object and data properties.
  To be able to transfer the selection of colimits and its properties defined for
  $\Set$ to categories of tuples of sets, we make use of a more general
  result that ensures that the selection of colimits and its properties are
  stable under products.

  \begin{theorem} \label{thm:products}
    Let $(\C_j)_{j\in J}$ be a family of inclusive
    categories with symbols and assume selections of colimits $sel_j$ 
    that have the properties in Thm.~\ref{prop:set-standard-selection} or Thm.~\ref{prop:set-standard-selection}.
    Then the product $\Pi_{j\in J}C_j$ can be canonically turned into an inclusive
    category with symbols that also has a selection of colimits $\sel$ with the 
    same properties.
  \end{theorem}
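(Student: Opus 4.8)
The plan is to construct everything componentwise and argue that each property transfers from the factors to the product. First I would equip $\Pi_{j\in J}\C_j$ with its inclusive structure: objects are tuples $(A_j)_{j\in J}$, morphisms are tuples of morphisms, and an inclusion $(A_j)\hra(B_j)$ is declared to exist iff $A_j\hra B_j$ in $\C_j$ for every $j$. Since a product of partially ordered classes (with the componentwise order) is again a partially ordered class, this is a broad subcategory that is a poset, so $\Pi_j\C_j$ is inclusive. For the symbol functor, I would set $|(A_j)_{j\in J}| := \biguplus_{j\in J}|A_j|$, the tagged disjoint union; this is inclusion-preserving because each $|\_|:\C_j\to\Set$ is. (Using the disjoint union rather than a plain union is what keeps symbols of different kinds — e.g.\ OWL classes vs.\ individuals — from clashing, which is exactly what the motivating example demands.)

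Next I would define the selection. Given a diagram $D:\I\to\Pi_j\C_j$, let $D_j := \pi_j\circ D:\I\to\C_j$ be its $j$-th component. Colimits in a product category are computed componentwise, so I set $\sel(D) := (\sel_j(D_j))_{j\in J}$, i.e.\ the cocone whose $j$-th component cocone is $\sel_j(D_j)$, declared defined exactly when every $\sel_j(D_j)$ is. That this is a colimit of $D$ is the standard fact that limits/colimits in product categories are pointwise. The bulk of the proof is then checking the listed properties one by one, and each is a short componentwise argument: \emph{completeness} — $D$ has a colimit iff every $D_j$ does; \emph{total pushouts} and \emph{pushout-stable inclusions} — an inclusion in $\Pi_j\C_j$ is an inclusion in each component, so each $\sel_j$ applies and returns an inclusion, and a tuple of inclusions is an inclusion; \emph{interchange} — a bifunctor $\I\times\J\to\Pi_j\C_j$ is componentwise a bifunctor into $\C_j$, and the identity $\sel_i\sel_j = \sel_j\sel_i$ holds in each $\C_j$, hence in the product; \emph{natural names, pushout-stable names, coherent pushouts} — these are all equalities of sets/objects/morphisms, and two tuples are equal iff they are equal componentwise, so each reduces to the corresponding property of $\sel_j$.

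The one genuinely non-routine point — and the step I expect to be the main obstacle — is the naming conditions, because $\Sym(D)$ for the product diagram is $\biguplus_{i\in|\I|}\biguplus_{j\in J}|D_j(i)|$, which is canonically the disjoint union over $j$ of the $\Sym(D_j)$, and one must check that $\sim_D$ and hence the partition into equivalence classes, the sets $\Nam(X)$, and the minimality/majority orderings all respect this decomposition. They do: a step $(i,(j,x))\leq_D (i',(j',|D(m)|\ldots))$ forces $j=j'$ since component morphisms act within their component, so $\sim_D$ is the disjoint union of the $\sim_{D_j}$, each class $X$ lives entirely in one component $j$, $\Nam_D(X)$ coincides with $\Nam_{D_j}(X)$, and $\leq_D$ restricted to that component agrees with $\leq_{D_j}$. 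Consequently $D$ is name-clash-free (resp.\ fully-sharing) iff every $D_j$ is — here it is essential that symbols of \emph{different} kinds never collide, which is precisely why the disjoint union was the right choice for $|\_|$ — and the majority-origin choice of representative made by $\sel_j$ in component $j$ is exactly the choice the product selection must make. Assembling these observations gives the theorem; I would present the inclusive-structure verification and the componentwise-colimit fact first, then dispatch the properties as a list, spending most of the prose on the $\Sym$-decomposition lemma.
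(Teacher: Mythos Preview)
Your proposal is correct and follows essentially the same approach as the paper: make the product inclusive via componentwise inclusions, define the symbol functor as the tagged disjoint union, select colimits componentwise via $\sel_j(\pi_j\circ D)$, and verify each property componentwise. The paper's own proof is considerably terser (it simply asserts that the properties transfer because inclusions and colimits in products are componentwise), whereas you additionally spell out the decomposition of $\Sym(D)$ and $\sim_D$ over the factors --- a useful elaboration, though note that strictly $\Nam_D(X)=\{j\}\times\Nam_{D_j}(X')$ rather than a literal equality.
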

  
%which means that symbols are marked with their type. E.g.\ in OWL,
%symbols are marked whether they are a class, a property or an
%individual.

\begin{example}
  In the case of multi-sorted logics with function or predicate symbols, 
  we can define a selection function for colimits in a step-wise manner.
  Let us consider the case of multi-sorted equational logic, that we denote $EQL$. If we fix a set of sorts $S$, let $\Sign^{EQL}_S$ be the category of
  multi-sorted algebraic signatures with sort set $S$. We can express it as
  \[\Sign^{EQL}_S=\Pi_{w\in S^*, s\in S}\ \Set.\]
  Objects of this category provide a set of operation symbols
  $F_{w,s}$ for each string of argument sorts $w$ and result sort
  $s$.
  With the canonical lifting of the symbol functors of the factors (all of which are the identity on $\Set$) to this
  product, we obtain the symbol functor on $\Sign^{EQL}_S$ given by $|\_|=\biguplus_{i\in J}|\pi_j(\_)|$,
  which decorates each operation symbol with argument and result sorts.
  We write $f:w\to s\in |F|$ instead of $((w,s),f)\in |F|$.
\end{example}

%\ednote{FR: I revised this section so that symbol functors are taken into account from the beginning.
%That required slightly rephrasing and the proof, and I rewrote the example accordingly. TM: OK, but I have extended the statement of the theorem a bit.
%}

\subsection{Split Fibrations}\label{sec:split}

%% \ednote{FR: This section needs to be revised but I don't know the material well. Or should we skip it?
%% TM: no, not at all! This is a very relevant and technically not entirely trivial theorem. I have looked at the proofs and do not see any need for revision.
%% \\
%% FR: OK on including it. But the comment on revision still applies because symbols are now used from the beginning.
%% As a first step, I flipped the order of Thm. 5 and 6, and added Thm. 5 to the assumptions of Thm. 6. It looks alright to me now, but I'm not sure. TM: have revised this a bit.}

  Thm.~\ref{thm:products} gives us a selection of colimits for $\Sign^{EQL}_S$. 
  However, our overall goal is to provide such a selection for $\Sign^{EQL}$.
  Now $\Sign^{EQL}$ is a split fibration $\Sign^{EQL}\to\Set$, with fibres
  $\Sign^{EQL}_S$. It is well-known that a split fibration can be obtained
  as Grothendieck construction (flattening) of an indexed category
  indexing the fibres.
  Hence, we will construct such an indexed category for $EQL.$
  This is achieved by observing that each function $u:S\to S'$ leads to a functor
  $B_u:\Sign^{EQL}_{S'}\to\Sign^{EQL}_S$ defined as $B_u(F')=F$, where
  $F_{w,s}=F'_{u(w),u(s)}$. 
  This functor has a left adjoint denoted 
  $L_u:\Sign^{EQL}_{S}\to\Sign^{EQL}_{S'}$
  defined as $L_u(F) = F'$, where $F'_{w',s'} = 
  \uplus_{w\in S^*,s\in S, u(w)=w', u(s) = s'}F_{w,s}$.   
  
  We thus obtain an indexed inclusive category $B:\Set^{op} \to \ICat$,
  and it suffices to show that the selection of colimits and its
  properties are stable under the Grothendieck construction
  (flattening, see \cite{DBLP:journals/tcs/TarleckiBG91}).

  \begin{theorem}\label{thm:Grothendieck}
    Let $B:\Ind^{op}\to \ICat$ be an indexed inclusive category (where $\Ind$ is inclusive
    itself) such that 
     \begin{itemize}
       \item $B$ is \emph{locally reversible}, i.e.~for each $u:i\to j$ in $\Ind$, 
       $B_u:B_j \to B_i$ has a selected left adjoint $F_u: B_i\to B_j$ (note that we do not require coherence of the $F_u$),
        
       \item $\Ind$ has a selection of colimits $\sel_\Ind$,
       \item each category $B_i$ has a selection of colimits $\sel^i$, for $i\in |\Ind|$. 
     \end{itemize}    
    
    Then  
    the Grothendieck category $B^\#$ is itself an inclusive category.\footnote{Note that this construction 
    extends to institutions, yielding Grothendieck institutions, see
    \cite{IIMT}.}
    
  \end{theorem}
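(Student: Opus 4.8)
The plan is to endow the Grothendieck category $B^\#$ with inclusions defined ``componentwise'' over the base $\Ind$ and the fibres $B_i$. Recall that an object of $B^\#$ is a pair $(i,A)$ with $i\in|\Ind|$ and $A\in|B_i|$, a morphism $(i,A)\to(j,C)$ is a pair $(u,f)$ with $u\colon i\to j$ in $\Ind$ and $f\colon A\to B_u(C)$ in $B_i$, and composition is $(v,g)\circ(u,f)=(v\circ u,\ B_u(g)\circ f)$, which is well defined since $B$ is a functor, so $B_{v\circ u}=B_u\circ B_v$ (as $B$ is contravariant), and each $B_u$ is a functor. I would then call a $B^\#$-morphism $(\iota,\epsilon)\colon(i,A)\to(j,C)$ an inclusion exactly when $\iota\colon i\hra j$ is an inclusion in $\Ind$ and $\epsilon\colon A\hra B_\iota(C)$ is an inclusion in $B_i$, and take these as the broad subcategory witnessing that $B^\#$ is inclusive. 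Observe that this uses only the inclusive structures of $\Ind$ and of the fibres together with the fact that every $B_u$ preserves inclusions, which is precisely what $B\colon\Ind^{op}\to\ICat$ provides; the selected left adjoints $F_u$ and the selections $\sel_\Ind$, $\sel^i$ are not needed for this claim and enter only when one subsequently transports a colimit selection to $B^\#$.

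First I would verify that these inclusions form a broad subcategory. They contain every identity, since $B_{\mathrm{id}_i}=\mathrm{Id}_{B_i}$ makes $(\mathrm{id}_i,\mathrm{id}_A)$ an inclusion of the required form. They are closed under composition: for $(\iota_1,\epsilon_1)\colon(i,A)\hra(j,C)$ and $(\iota_2,\epsilon_2)\colon(j,C)\hra(k,E)$ the composite is $(\iota_2\circ\iota_1,\ B_{\iota_1}(\epsilon_2)\circ\epsilon_1)$, in which $\iota_2\circ\iota_1$ is again an inclusion in $\Ind$, and the fibre component is a morphism $A\to B_{\iota_1}(C)\to B_{\iota_1}(B_{\iota_2}(E))=B_{\iota_2\circ\iota_1}(E)$ in $B_i$ that is the composite of the inclusion $\epsilon_1$ with $B_{\iota_1}(\epsilon_2)$, the latter being an inclusion because $B_{\iota_1}$ preserves inclusions; hence it is an inclusion in $B_i$.

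It then remains to check that this broad subcategory is a partially ordered class, i.e.\ thin and antisymmetric. For thinness, if $(\iota,\epsilon)$ and $(\iota',\epsilon')$ are both inclusions $(i,A)\hra(j,C)$, then $\iota=\iota'$ by uniqueness of inclusions in the poset $\Ind$, and then $\epsilon=\epsilon'$ by uniqueness of inclusions in the poset $B_i$. For antisymmetry, if $(i,A)\hra(j,C)$ and $(j,C)\hra(i,A)$, then $i\hra j$ and $j\hra i$ in $\Ind$ force $i=j$, the witnessing base inclusions are then both $\mathrm{id}_i$, and the witnessing fibre inclusions yield $A\hra B_{\mathrm{id}_i}(C)=C$ and $C\hra A$ in $B_i$, whence $A=C$; thus $(i,A)=(j,C)$. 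This establishes that $B^\#$ is an inclusive category. I expect no serious obstacle for the statement as given: the argument is bookkeeping, the only subtle points being the contravariance of $B$ in the composition law and the deliberate choice to phrase fibre inclusions through the reindexing functors $B_\iota$ (guaranteed to preserve inclusions) rather than through the left adjoints $F_\iota$ (which are not assumed to). The real effort lies in the accompanying construction of a colimit selection on $B^\#$ — pick the base of the colimit via $\sel_\Ind$, transport every object of the diagram into that fibre along the appropriate $F_u$, apply the fibre selection $\sel^i$ there, and control the non-coherence of the $F_u$ — which is the step I would expect to be genuinely demanding.
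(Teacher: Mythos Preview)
Your definition of inclusions in $B^\#$ is exactly the paper's: $(u,\sigma)$ is an inclusion iff both components are. Your verification that this yields a broad subcategory that is a partial order is correct and more explicit than what the paper writes; in that sense you have done what the theorem, read against the paper's own Definition of inclusive category, literally requires.

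Where the paper differs is that its proof goes on to exhibit further order-theoretic structure on $B^\#$: a least element $(\emptyset_\Ind,\emptyset_{B_{\emptyset_\Ind}})$, binary unions $(i\cup j,\ F_{\iota_{i\subseteq i\cup j}}(A_i)\cup F_{\iota_{j\subseteq i\cup j}}(A_j))$, intersections $(\bigcap J,\ \bigcap B_{\iota}A_j)$, and the fact that union--intersection squares are pushouts. This is more than the bare definition demands, and it is precisely in the construction of unions that the left adjoints $F_u$ enter. So your remark that $F_u$ and the colimit selections are ``not needed for this claim'' is accurate for inclusiveness in the minimal sense you verified, but be aware that the paper's proof does invoke $F_u$ already here, presumably because this extra structure (inherited from the richer notion of inclusive category in the cited reference and needed to feed into Theorem~\ref{thm:Grothendieck-properties}) is tacitly part of what is being established. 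Your closing paragraph anticipating how $F_u$ and the selections are used for colimits in $B^\#$ is on target.
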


\begin{theorem}\label{prop:Grothendieck-symbols}
Under the assumptions of Thm.~\ref{thm:Grothendieck}, let
$(|\_|\theta):B\to\IndSet$ be a (faithful inclusive) oplax indexed
functor (where $\IndSet:\Ind^{op}\to\ICat$ is the
constant functor delivering $\Set$).

This amounts to, for each $B_i$,
a (faithful inclusive) symbol functor $|\_|_i:B_i\to\Set$, and for
each $u:i\to j$, $\theta_u:B_u;|\_|_i\to|\_|_j$ a natural
transformation, such that the $\theta_u$ are coherent.

Then $B^\#$ can be
equipped with a symbol functor as well.
\end{theorem}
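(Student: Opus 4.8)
The plan is to define the symbol functor on $B^\#$ directly from the fibrewise symbol functors $|\_|_i$ and the comparison transformations $\theta_u$, and then verify the two conditions in the definition of a category with symbols: functoriality and inclusion-preservation. Recall that an object of $B^\#$ is a pair $(i,X)$ with $i\in|\Ind|$ and $X\in|B_i|$, a morphism $(i,X)\to(j,Y)$ is a pair $(u,f)$ with $u\colon i\to j$ in $\Ind$ and $f\colon X\to B_u(Y)$ in $B_i$ (matching the variance of $B\colon\Ind^{op}\to\ICat$), and the composite of $(u,f)$ with $(v,g)\colon(j,Y)\to(k,Z)$ is $(u;v,\,B_u(g)\circ f)$, using the strictness $B_u(B_v(Z))=B_{u;v}(Z)$. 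I would set $|(i,X)|^\#:=|X|_i$ on objects and $|(u,f)|^\#:=(\theta_u)_Y\circ|f|_i\colon|X|_i\to|Y|_j$ on morphisms, where $(\theta_u)_Y\colon|B_u(Y)|_i\to|Y|_j$ is the component at $Y$ of $\theta_u\colon B_u;|\_|_i\to|\_|_j$. If one prefers the opfibrational presentation of morphisms of $B^\#$ as pairs $(u,g\colon F_u X\to Y)$, then transposing $f$ across $F_u\dashv B_u$ and applying naturality of $\theta_u$ shows the same function results, so the definition does not depend on this choice.

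Functoriality is then a short diagram chase. Preservation of identities reduces to the coherence law $\theta_{id_i}=\mathrm{id}$ (under the strictness $B_{id_i}=\mathrm{id}_{B_i}$) together with functoriality of $|\_|_i$. For composition, after cancelling the common right factor $|f|_i$ and expanding $|B_u(g)\circ f|_i=|B_u(g)|_i\circ|f|_i$, one must check $(\theta_{u;v})_Z\circ|B_u(g)|_i=(\theta_v)_Z\circ|g|_j\circ(\theta_u)_Y$. This follows by combining naturality of $\theta_u$ at the morphism $g\colon Y\to B_v(Z)$ of $B_j$, which gives $(\theta_u)_{B_v(Z)}\circ|B_u(g)|_i=|g|_j\circ(\theta_u)_Y$, with the oplax coherence of $\theta$, which gives $(\theta_{u;v})_Z=(\theta_v)_Z\circ(\theta_u)_{B_v(Z)}$.

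For inclusion-preservation I would quote the description of inclusions in $B^\#$ used in the proof of Thm.~\ref{thm:Grothendieck}: an inclusion $(i,X)\hookrightarrow(j,Y)$ consists of an inclusion $\iota\colon i\hookrightarrow j$ in $\Ind$ together with a corresponding fibre inclusion (equivalently, an inclusion $X\hookrightarrow B_\iota(Y)$ in $B_i$). Since each $|\_|_i$ preserves inclusions and, because $(|\_|,\theta)$ is an \emph{inclusive} oplax indexed functor, the components of $\theta_\iota$ along an inclusion $\iota$ are themselves inclusions in $\Set$, the image of such an inclusion under $|\_|^\#$ is a composite of two inclusions, hence an inclusion. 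Thus $(B^\#,|\_|^\#)$ is an inclusive category with symbols. If faithfulness of $|\_|^\#$ is wanted as well, one additionally needs each $\theta_u$ to be componentwise monic; this holds in the intended examples, such as the $EQL$ fibration where the $\theta_u$ are built from coproduct injections, but it is not forced by the abstract hypotheses.

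The main obstacle I anticipate is not any single computation but the variance bookkeeping: keeping the direction of $B\colon\Ind^{op}\to\ICat$, the adjunctions $F_u\dashv B_u$, the oplax (as opposed to lax) orientation of $\theta$, and the Grothendieck construction all mutually consistent, so that ``coherence of $\theta_u$'' gets unwound into exactly the identity needed for functoriality of composition. A smaller but real point is that inclusion-preservation depends on the precise notion of inclusion in $B^\#$ fixed in the proof of Thm.~\ref{thm:Grothendieck}, which must be quoted faithfully.
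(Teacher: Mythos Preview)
Your argument is correct and considerably more detailed than the paper's, but you construct a \emph{different} symbol functor from the one the paper has in mind. You set $|(i,X)|^\#:=|X|_i$, i.e.\ you take only the fibre-level symbols. The paper instead defines
\[
|(i,A_i)|:=|i|\uplus|A_i|_i,\qquad |(u,\sigma)|:=|u|\uplus\bigl(|\sigma|_i;(\theta_u)_{A_j}\bigr),
\]
so it also records the symbols of the index object $i$ (tacitly assuming that $\Ind$ itself comes with a symbol functor, as in the running example $\Ind=\Set$). This distinction matters for the intended application: for $EQL$ the resulting symbol functor should deliver ``the set of sorts plus the set of typed function symbols'', and the sorts come precisely from the $|i|$-summand that your definition drops. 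Your functor is still a perfectly good witness for the bare statement ``$B^\#$ can be equipped with a symbol functor'', but it would not reproduce the standard symbol functor in the examples and would make faithfulness harder to obtain (two morphisms differing only in the $\Ind$-component need not be separated by your $|\_|^\#$, even if all $\theta_u$ are monic).

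On the positive side, the paper's proof is literally just the definition above with no verification, whereas you actually carry out the functoriality check via naturality of $\theta_u$ and the oplax coherence law, and you spell out inclusion-preservation. All of that transfers verbatim to the paper's larger functor: one simply takes the disjoint union with $|u|$ on the morphism side, and functoriality and inclusion-preservation then follow componentwise from your argument together with functoriality and inclusion-preservation of the symbol functor on $\Ind$.
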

\begin{proof}
Define $|(i,A_i)|=|i|\uplus|A_i|_i$, and $|(u:i\to j,\sigma)|=|u|\uplus(|\sigma|_i;(\theta_u)_{A_j})$.
\end{proof}

  \begin{theorem}\label{thm:Grothendieck-properties}
  Under the assumptions of Thm.~\ref{thm:Grothendieck} and Thm.~\ref{prop:Grothendieck-symbols}, extended by:
     \begin{itemize}
       \item $F_u$ preserves inclusions, and moreover, 
       \item the unit
         and counit of the adjunction are inclusions.
     \end{itemize}    
  If $\Ind$ and each $B_i$ have colimit selections enjoying the properties of
  Thm.~\ref{prop:set-standard-selection}, then so does $B^\#$.
%\ednote{TM: have moved selection
%  of colimits to after introduction of symbol functor.}
  \end{theorem}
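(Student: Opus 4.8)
The plan is to define the colimit selection on $B^\#$ by the familiar two-stage recipe and then to check each property by projecting to the base $\Ind$ and restricting to a single fibre. Concretely, given a diagram $D:\I\to B^\#$ with $D(k)=(i_k,A_k)$, first form the base diagram $D_\Ind:\I\to\Ind$ and select $(i_0,(u_k:i_k\to i_0)_k):=\sel_\Ind(D_\Ind)$. Using the left adjoints, transport $D$ into the fibre $B_{i_0}$: the object at $k$ is $F_{u_k}(A_k)$, and a $D$-morphism $(v,\bar\sigma):D(k)\to D(l)$ induces a $B_{i_0}$-morphism $F_{u_k}(A_k)\to F_{u_l}(A_l)$ from the adjoint transpose of $\bar\sigma$ together with the equality $u_k=v;u_l$; this yields a well-defined diagram $D_{i_0}:\I\to B_{i_0}$. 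Select $(A_0,(\nu_k)_k):=\sel^{i_0}(D_{i_0})$ and put $\sel(D):=\big((i_0,A_0),\ (u_k,\overline{\nu_k})_k\big)$, where $\overline{\nu_k}:A_k\to B_{u_k}(A_0)$ is the adjoint of $\nu_k$. That this cocone is a colimit of $D$ is exactly the Grothendieck colimit construction already underlying Thm.~\ref{thm:Grothendieck}, so nothing new is needed there.

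Next I would show how each property of Thm.~\ref{prop:set-standard-selection} for $B^\#$ reduces to the same property for $\sel_\Ind$ and the $\sel^i$. The crucial structural observation is that, by the formulas $|(i,A_i)|=|i|\uplus|A_i|_i$ and $|(u,\sigma)|=|u|\uplus(|\sigma|_i;(\theta_u)_{A_j})$ from Thm.~\ref{prop:Grothendieck-symbols}, the set $\Sym(D)$ splits as $\Sym(D_\Ind)\uplus\big(\biguplus_k|A_k|_{i_k}\big)$, the preorder $\leq_D$ is the disjoint sum of $\leq_{D_\Ind}$ on the base part and of $\leq_{D_{i_0}}$ (pulled back along the $\theta$'s) on the fibre part, and likewise the equivalence classes, the sets $\Nam(X)$, and the cardinality functions $x\mapsto\#\{i\mid(i,x)\in X\}$ all decompose. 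Hence $D$ is name-clash-free (resp.\ fully sharing) iff $D_\Ind$ and $D_{i_0}$ are; natural names for $\sel$ follows from natural names for $\sel_\Ind$ and $\sel^{i_0}$; majority-origin-based names for $\sel$ follows because minimality and cardinality are computed componentwise. Completeness follows because $B^\#$ has a colimit of $D$ exactly when $\Ind$ has one of $D_\Ind$ (the fibres always supplying the needed fibre colimit), so whenever a colimit of $D$ exists, both ingredient selections are defined and hence so is $\sel$.

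For the inclusion-related properties I would use the extra hypotheses that $F_u$ preserves inclusions and that the unit and counit are inclusions, which force the inclusion order of $B^\#$ to be: $(i,A_i)\hra(j,A_j)$ iff $i\hra j$ in $\Ind$ and $F_{\iota}(A_i)\hra A_j$ in $B_j$. Total pushouts and pushout-stable inclusions then propagate: a span in $B^\#$ with one inclusion leg projects to such a span in $\Ind$ and, after transport, to such a span in the fibre over the base pushout, where $\sel_\Ind$ and $\sel^{i_0}$ are defined and choose stable inclusions; reassembling gives a selected pushout in $B^\#$ with a stable inclusion, and pushout-stable names follow from the componentwise decomposition of $\Sym$ (both the base residual $|i_0|\setminus|p|$ and the fibre residual behave as required). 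Interchange for a name-clash-free bifunctor $D:\I\times\J\to B^\#$ of inclusions follows by applying interchange of $\sel_\Ind$ to $D_\Ind$ and of $\sel^{i_0}$ to the transported fibre bifunctor (both consisting of inclusions, both name-clash-free by the decomposition).

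I expect the genuine obstacle to be the \emph{coherence} clauses --- coherent pushouts (Def.~\ref{def:coherencepushouts}) and, through it, the colimit-node coherence of Def.~\ref{def:coherence} --- precisely because the $F_u$ are not required to be coherent, so $F_{v;w}\neq F_w\circ F_v$ in general and a colimit computed in two stages is a priori only isomorphic to the one computed directly. The way around this is to exploit that the \emph{base} selection is itself coherent and has pushout-stable inclusions, which pins down the intermediate base nodes (they are the selected step-by-step pushouts in $\Ind$, with residual legs exactly the induced morphisms $u_2^{c_1}$, etc.); one then checks that the only fibre transports ever required are along this single family of legs, and that the left adjoints, being colimit-preserving, carry the selected fibre colimits to the selected ones --- here one uses once more that the componentwise decomposition of $\Sym$ makes the fibre selection canonical, hence stable under the isomorphisms produced by the adjunction. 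Verifying clauses~3 and~4 of Def.~\ref{def:coherencepushouts} for $\sel$ in this manner --- tracking the adjoint transposes through the associativity equations --- is the main computational burden; everything else is bookkeeping on the $|i|\uplus|A_i|_i$ decomposition.
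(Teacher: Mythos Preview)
Your two-stage construction and the componentwise decomposition of $\Sym(D)$ via $|(i,A_i)|=|i|\uplus|A_i|_i$ are exactly what the paper does, and for completeness, total pushouts, pushout-stable inclusions, natural names, pushout-stable names, and interchange your reductions to $\sel_\Ind$ and $\sel^{i_0}$ are essentially the paper's reasoning (the paper also records that $B_u$, $F_u$, and the transpositions $\_^\#$, $\_^\flat$ preserve inclusions, which you are tacitly using when you reassemble the selected fibre cocone into a $B^\#$-cocone with an inclusion leg).

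Where your argument goes wrong is coherent pushouts. You treat the lack of coherence of the $F_u$ as a genuine obstacle and propose to circumvent it by arguing that ``the left adjoints, being colimit-preserving, carry the selected fibre colimits to the selected ones'' and that the fibre selection is ``stable under the isomorphisms produced by the adjunction''. That step does not stand: left adjoints preserve colimits only up to isomorphism, and there is no independent reason why an arbitrary isomorphism should take a \emph{selected} colimit to a \emph{selected} one --- that is precisely the coherence statement you are trying to prove, so the argument is circular. The paper's point is that this obstacle is illusory: the extra hypotheses of the theorem (unit and counit are inclusions, $F_u$ preserves inclusions) force the comparison isomorphisms $\tau_{u,v}:F_{u;v}\to F_v\!\circ\! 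F_u$ to be identities, i.e.\ the $F_u$ \emph{are} coherent on the nose. To see this, note first that the transpositions $\_^\#$ and $\_^\flat$ preserve inclusions (each is a composite of a unit or counit with an image under $F_u$ or $B_u$ of the given morphism). Both $\tau_{u,v}$ and $\tau_{u,v}^{-1}$ are adjoint transposes of composites of units, hence inclusions; an inclusion whose inverse is also an inclusion is the identity by antisymmetry of the inclusion order. With $\tau_{u,v}=id$ in hand, the two pasted fibre pushouts in the vertical (resp.\ horizontal) decomposition literally share their boundary, and coherent pushouts for $B^\#$ follows directly from coherent pushouts for $\sel_\Ind$ and for $\sel^{i_0}$, with no further bookkeeping beyond applying $F$ of the intermediate base leg to the first fibre square.
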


  We can apply Thm.~\ref{thm:Grothendieck-properties} to $B:\Set^{op}\to\ICat$ as
  defined above to obtain a selection of colimits $\sel^{EQL}$ for
  $EQL$ signatures. By the theorem,
  $\sel^{EQL}$ has the properties in
  Thm.~\ref{prop:set-standard-selection}.

\begin{example}
We apply these result to $EQL$, where $B_S=\Sign^{EQL}_S$, using the symbol
functors $|\_|_S:\Sign^{EQL}_S\to\Set$ ($S\in|\Set|$) defined above.
Given $u:S\to S'$, $\theta_u:B_u;|\_|_S\to|\_|_{S'}$ is defined as
$(\theta_u)_{F'}:|B_u(F')|\to|F'|$, acting as
$(\theta_u)_{F'}(f:u(w)\to u(s))=f:w\to s$. Using
Thm.~\ref{prop:Grothendieck-symbols}, we obtain the usual symbol
functor for many-sorted signatures, which for any signature delivers
the set of sorts plus the set of typed function symbols of form
$f:w\to s$.
% mapping $f\in (B_u(F'))_{w,s}=F'_{u(w),u(s)}$ to $f\in F'_{w,s}$.

Again, the symbol selection principles of Thm.~\ref{prop:set-standard-selection} carry over.
\end{example}

\section{Categories for Improved Colimit Selection}\label{sec:categories}
\subsection{Named Specifications}

An important technique for avoiding name clashes is to use bipartite IRIs as symbols.
IRIs are Internationalized Resource Identifiers for identification per \nisref{IETF/RFC 3987:2005}.
Symbols using bipartite IRIs consist of
\begin{compactitem}
\item \textbf{namespace}: an IRI that identifies the containing specification, 
usually ending with \syntax{/}\footnote{In some languages, \syntax{\#} is used instead of $/$. But this has the disadvantage that, when used as an IRL, the fragment following the \syntax{\#} is not transmitted to servers.}
\item \textbf{local name}: a name (not containing \syntax{/}) that identifies a non-logical symbol within a specification.
\end{compactitem}
Let $\IRI$ be the subcategory of $\Set$ containing only the sets of bipartite IRIs.

%Here is an excerpt of the IRI syntax:
%\begin{lstlisting}[language=ebnf] 
%IRI            = scheme ":" ihier-part [ "?" iquery ] [ "#" ifragment ]
%ihier-part     = "//" iauthority ipath-abempty | ...
%ipath-abempty  = *( "/" isegment )
%isegment       = *ipchar
%\end{lstlisting}

For most practical purposes, it is acceptable to restrict attention to $\IRI$-compliant signatures.
For example, DOL (in accordance with many other languages) strongly recommends using bipartite IRIs.

Note that in an $\IRI$-compliant signature $\Sigma$, the symbols in $|\Sigma|$ may have different namespaces.
For example, in DOL, namespaces $M$ serve as the identifiers of basic specifications $\Sigma$, and then symbols in $|\Sigma|$ are of the form $M/sym$.
But when a specification $N$ imports $M$, (see Sect.~\ref{sec:networks}), the namespace $M$ of the imported symbols is retained and only new symbols declared in $N$ use the namespace $N$.

The main advantage of using IRIs is that specifications (and thus the symbols in them) have globally unique names \cite{W3C:NOTE-swbp-vocab-pub-20080828}.
That makes name clashes much less common:
\begin{proposition}\label{prop:sharing}
Consider a set of basic signatures with pairwise different namespaces.
Then diagrams generated by networks consisting only of $\IRI$-compliant basic specifications and imports are fully-sharing.
\end{proposition}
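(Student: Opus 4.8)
The plan is to reduce the claim to Proposition~\ref{prop:nameclashfree}. A network built from basic specifications and imports yields a diagram $D:\I\to\C$ in which $\I$ is freely generated by the network graph and every edge, being an import, is mapped to an inclusion; hence every edge of $D$ is an inclusion, and composites of inclusions along paths are again inclusions. So it suffices to verify the common-origin hypothesis of Proposition~\ref{prop:nameclashfree}: whenever $(i,x),(j,x)\in\Sym(D)$ carry the same name $x$, there are a node $k$, a symbol $y\in|D(k)|$, and $\I$-morphisms $m:k\to i$, $n:k\to j$ with $|m|(y)=|n|(y)=x$. Given this, Proposition~\ref{prop:nameclashfree} yields name-clash-freeness, and since all edges of $D$ are inclusions it yields fully-sharing, which is exactly the assertion.

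To establish the common-origin hypothesis I would invoke the namespace discipline for $\IRI$-compliant networks. Since $x$ is a bipartite IRI it has a well-defined namespace $M$; because the basic specifications have pairwise distinct namespaces, there is at most one node $k_M$ of the network whose namespace is $M$, and this is the only node in which a symbol with namespace $M$ can be freshly declared, every other occurrence being inherited verbatim along an import. The key lemma, proved by induction over the import structure of the network, is therefore: if a symbol $x$ with namespace $M$ lies in $|D(i)|$, then $k_M$ is a node of the network, $x\in|D(k_M)|$, and there is a path of import edges from $k_M$ to $i$ whose $D$-image is an inclusion sending $x$ to $x$. Applying this lemma to both $i$ and $j$ delivers the desired $k=k_M$, $y=x$, and inclusions $m,n$.

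The main obstacle is making this induction precise, which requires pinning down what ``the diagram generated by a network'' is: one must state that the signature at each node is obtained from the symbols freshly declared in that basic specification (all carrying the node's own namespace) together with the union of the signatures reached by incoming import edges, with imports interpreted as inclusions that retain namespaces. Once this is in place the induction is routine --- freshly declared symbols give the base case with $k=i$, imported symbols give the step case by prepending an edge to a path supplied by the induction hypothesis --- and cycles in the network graph are harmless, since a cycle of inclusions collapses to an identity by antisymmetry of the inclusion order. Beyond this bookkeeping the argument is a direct appeal to Proposition~\ref{prop:nameclashfree}.
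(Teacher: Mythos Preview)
Your proposal is correct and follows the same route as the paper: reduce to Proposition~\ref{prop:nameclashfree} using the uniqueness of namespaces to locate a common origin for any two occurrences of the same IRI, and then invoke the inclusions-only clause for fully-sharing. The paper's own proof is a one-liner (``because basic specifications have unique identifiers, the result follows immediately from Prop.~\ref{prop:nameclashfree}''), so you are spelling out exactly the induction on the import structure and the role of the declaring node $k_M$ that the paper leaves implicit; this extra care is appropriate and does not deviate from the intended argument.
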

\begin{proof}
Because basic specifications have unique identifiers, the result follows immediately from Prop.~\ref{prop:nameclashfree}.
\end{proof}
In practice, the assumptions of Prop.~\ref{prop:sharing} quite often
hold, because
%(1) signature categories are often $\Set$-like or algebraic, and (2)
networks to be combined often consist of import links only.

%Additionally, the ``linked data'' principle for publishing documents on the Web, suggests IRIs should not just serve as \emph{identifiers}, but also as \emph{locators},
%which, when dereferenced, give access to a concrete representation of a specification or one of its elements.
%% For the specific case of
%% RDFS and OWL ontologies, these best practices are documented in
%% \cite{W3C:NOTE-swbp-vocab-pub-20080828}.  The latter is a
%% specialization of the linked data principles, which apply to any
%% machine-processable data published on the Web
%% \cite{BernersLee:LinkedData2006}.)  

\begin{proposition}\label{prop:linkeddata}
Consider $\Set$ with standard inclusions and the identity symbol functor.
The selection constructed in Thm.~\ref{prop:set-standard-selection} can be modified
to a selection that additionally preserves $\IRI$-compliance.
\end{proposition}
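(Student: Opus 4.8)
The plan is to reuse the construction from the proof of Thm.~\ref{prop:set-standard-selection} essentially unchanged and to patch only the two places where it can manufacture a symbol that is not a bipartite IRI; everywhere else $\IRI$-compliance comes for free. Recall that that construction distinguishes four cases. In the name-clash-free inclusion-only case, $\sel(D)$ is the union $\bigcup_i D(i)$, and $\IRI$ is closed under unions, so this case already preserves $\IRI$-compliance. In the third case — a name-clash-free diagram that is neither a union nor a span — the vertex $|\sel(D)|$ consists of representatives $r(X)\in\Nam(X)$, i.e.\ of symbols already occurring in $D$, hence it lies in $\IRI$ whenever $D$ does. So neither of these cases needs to be touched.

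This leaves the span case and the arbitrary-colimit case. For a span $B\xhookleftarrow{\iota}P\xrightarrow{\sigma}A$ with $\sigma$ not an inclusion, the only new symbols are the elements of the auxiliary set $B'$, chosen in bijection $\kappa$ with $(B\cap A)\setminus P$ and disjoint from $A\cup(B\setminus A)$. The patch is to take $B'$ to be a set of \emph{fresh} bipartite IRIs, e.g.\ all carrying a common namespace $n$ that is the namespace-component of no symbol in $A\cup B$, with pairwise distinct local names; then $B'\subseteq\IRI$, $B'$ is automatically disjoint from $A\cup(B\setminus A)$, and $\sigma(B)=A\cup(B\setminus(P\cup A))\cup B'$ is $\IRI$-compliant. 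Note that a name-clash-free span has $(B\cap A)\setminus P=\emptyset$ — the only symbols named $x$ are $(A,x)$ and $(B,x)$, and for a span these are $\sim_D$-identified only if $x\in P$ — so the fresh-name branch is never entered on a name-clash-free span, a fact we use below. Finally, for an arbitrary non-name-clash-free $\IRI$-compliant diagram, instead of picking any colimit we pick one whose vertex is carried by a set of bipartite IRIs; such a colimit exists because any colimit's vertex may be bijectively relabelled by fresh IRIs.

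It remains to check that the patched $\sel$ still enjoys all the properties asserted in Thm.~\ref{prop:set-standard-selection}. This is immediate: on diagrams that are not $\IRI$-compliant the patched selection coincides with the original one, and on $\IRI$-compliant diagrams it differs from the original only (i) on name-clashing spans, where merely the choice of fresh symbols changed while the shape of the cocone — and in particular the standard inclusion $A\hra\sigma(B)$ — did not, so total pushouts and pushout-stable inclusions (the only properties bearing on that case) still hold; and (ii) on non-name-clash-free diagrams, which are unconstrained by every property except completeness, which still holds. For name-clash-free $\IRI$-compliant diagrams the output is literally unchanged (only the union and representative branches can fire), so natural names, pushout-stable names, coherent pushouts, interchange and majority-origin-based names transfer verbatim from Thm.~\ref{prop:set-standard-selection}.

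The work here is bookkeeping rather than mathematics; the one point that deserves care is that minting fresh IRIs for spans must not silently break the delicate coherence-type properties (pushout-stable names, coherent pushouts). This is handled by the observation above that those properties are only claimed for name-clash-free diagrams and that a name-clash-free span never reaches the fresh-name branch, so the patch acts as the identity exactly where coherence is demanded. The only genuinely extra hypothesis is that fresh bipartite IRIs can always be supplied, which holds for the finite — or, more generally, small — diagrams occurring in practice, since the namespace component alone ranges over an unbounded supply of IRIs.
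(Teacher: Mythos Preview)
Your proof is correct and follows essentially the same approach as the paper: patch the places where the original construction may introduce new symbols by choosing those symbols as bipartite IRIs with a fresh namespace. The paper's proof is a one-liner to this effect; your case analysis and the careful check that the properties of Thm.~\ref{prop:set-standard-selection} survive the patch (in particular your observation that on name-clash-free spans $(B\cap A)\setminus P=\emptyset$, so the fresh-name branch never fires where coherence is demanded) are welcome elaborations that the paper leaves implicit, merely remarking afterwards that fresh namespaces interact poorly with coherence.
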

\begin{proof}
We just need to ensure that new symbols in the colimit are of the form $N/sym$ for some fresh namespace $N$.
\end{proof}

However, generating fresh namespaces interacts poorly with coherence.

\subsection{Structured Symbol Names}

There are essentially two problems when trying to select colimits canonically: name clashes and ambiguous names.
Intuitively, name clashes arise if we have one name for multiple symbols.
And ambiguity arises if we have multiple names for one symbol.
If neither is the case, named specifications are usually sufficient to obtain canonical colimits.

Our goal now is to handle name clashes and ambiguity.
We introduce a subcategory $\Vocab$ of $\Set$ and focus on $\Vocab$-compliance-preserving colimits.
We want to pick $\Vocab$ in such a way that we can select canonical colimits elegantly.

To motivate the following definition of $\Vocab$, let us look again at the causes behind name clashes and ambiguity.
Name clashes arise if the same node name occurs multiple times in a diagram.
For example, consider two nodes $i$ and $j$ (without any arrows) and $|D(i)|=\{a\}$ and $|D(j)|=\{a\}$.
(This occurs, for example, when taking the disjoint union of the set $\{a\}$ with itself.)
Because this diagram is not name-clash-free, we cannot have natural names in the colimit.
Our solution below introduces qualifiers that create two copies $p/a$ and $q/a$ of the clashing name $a$.

Ambiguity arises if a diagram contains a non-inclusion arrow.
For example, consider $m:i\to j$, and $|D(i)|=\{a\}$ and $|D(j)|=\{b\}$ and $|D(m)|(a)=b$.
$\sim_D$ has one equivalence class, which contains $(i,a)$ and $(j,b)$.
In Section~\ref{sec:prop:symbols}, we focused on choosing either $a$ or $b$ as a natural name in the colimit.
Our solution here retains both names and chooses the set $\{a,b\}$ as a symbol in the colimit.

Because colimits can be iterated, $\Vocab$ must allow for any combination of those two constructions.
That yields the following definition:

\begin{definition}[Structured Symbols]\label{def:structured-symbol}
We assume a fixed set $\Name$ of strings (which we call \textbf{names}).

We write $\QualName$ for the set of lists of names (which we call \textbf{qualified names}).
We assume $Name\subset QualName$, and we write $\nil$ for the empty list and $p/q$ for the concatenation of lists.

A \textbf{structured symbol} is a set of qualified names.

A \textbf{vocabulary} $V$ is a set of pairwise disjoint structured symbols.
We write $\vocdom{V}$ for $\bigcup_{S\in V}S$, and for every $s\in\vocdom{V}$ we write $\vocclass{s}$ for the unique $S\in V$ such that $s\in S$.

We write $\Vocab$ for the full subcategory of $\Set$ containing only the vocabularies.
\end{definition}

The operation $\vocclass{s}$ is crucial: It allows us to use any $s\in S\in V$ as a representative for $S$.
Thus, in order to use structured symbols, we do not have to change our external (human-facing) syntax: Users can still write and read $s$.
We only have to change our internal (machine-facing) syntax by maintaining the set $S$.
%We already use this simplification in our examples, where we routinely use $s$ instead of $\vocclass{s}$.\ednote{remove if not needed}

$\Vocab$ is an inclusive category with symbols (using the same symbol functor as for $\Set$).
As we see below, it allows for good colimit selections.
But the symbols used in the symbol functor cannot be strings anymore: they are sets of lists of strings.

Above we left open the question where the qualifiers come from that we use to disambiguate name clashes.
It would not be acceptable to use the indices from $I$ as qualifiers because they are arbitrary and not visible to the user.
Instead, we assume that the user has provided qualifiers by assigning labels to some nodes in the diagram:

\begin{definition}[Labeled Diagram]\label{def:labeled-diagram}
A labeled $\C$-diagram $(D,L)$ consists of a diagram $D:\I\to\C$ and a function $L$ from $\I$-objects to $\Name\cup\{\nil\}$.
\end{definition}

$L$ can be a partial function because we only need to label those nodes that are involved in name clashes.
However, it is more convenient to make $L$ a total function by assuming that all unlabeled nodes are labeled with the empty list $\nil$.

Similar to Def.~\ref{def:diasym}, we define the symbols of a labeled diagram:

\begin{definition}[Symbols of a Labeled Diagram]\label{def:labdiasym}
Let $(D:\I\to\Vocab,L)$ be a labeled diagram over $\Vocab$.
We define:
\[\Sym(D,L)=\{(i,L(i)/x) \;|\; i\in \I, x\in\vocdom{D(i)}\}\]
\[(i,L(i)/x)\leq_{DL}(j,L(j)/y) \text{ if for some }m:i\to j\in\I, D(m)(S)=T, x\in S, y\in T\]
$\sim_{DL}$ is the equivalence relation on $\Sym(D,L)$ generated by $\leq_{DL}$.

For every $X\in \Sym(D,L)/\sim_{DL}$, let $\Nam(X)=\{q \;|\; (i,q)\in X\}$.
We say that $(D,L)$ is \emph{name-clash-free} if the sets $\Nam(X)$ are pairwise disjoint.
\end{definition}

Every plain diagram can be seen as a labeled diagram by using $L(i)=\nil$ for all $i$.
In that case, the definition of name-clash-free of Def.~\ref{def:labdiasym} coincides with the one from Def.~\ref{def:nameclash}.

We can now see the power of structured symbols by giving a selection of colimits in $\Vocab$:

\begin{theorem}[Colimits of Vocabularies]
Let $(D,L)$ be a name-clash-free labeled diagram.
Then
\begin{compactitem}
 \item the set $\sel(D,L)$ defined by $\{\Nam(X)\;|\;X\in\Sym(D,L)/\sim_{DL}\}$ is a vocabulary,
 \item the maps $\mu_i:D(i)\to \sel(D,L)$ defined by $\mu_i(\vocclass{x})=\vocclass{L(i)/x}$ are well-defined.
\end{compactitem}
Then $\sel(D,L)$ and the $\mu_i$ form a colimit of $D$.
\end{theorem}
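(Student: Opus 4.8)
The plan is to reduce this to the standard fact that a quotient of a coproduct is a colimit in $\Set$. Since $\Vocab$ is a \emph{full} subcategory of $\Set$ --- its objects are sets and its morphisms functions --- and the proposed apex $\sel(D,L)$ is a vocabulary, it suffices to show that $(\sel(D,L),(\mu_i))$ is a colimit of the underlying $\Set$-diagram of $D$; the mediating function between vocabularies is then automatically a $\Vocab$-morphism. The only twist relative to the textbook argument is that the ``symbols'' of a node are \emph{structured} symbols, i.e.\ bundles of qualified names that a morphism may merge or split, so I would run the whole argument at the finer level of $\Sym(D,L)$ and its relation $\sim_{DL}$.

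First I would dispatch the two bullet points. That $\sel(D,L)$ is a vocabulary is immediate: each $\Nam(X)$ is a set of qualified names, and these sets are pairwise disjoint precisely by the name-clash-freeness hypothesis; since $\Nam(X)$ contains the second component of any element of $X$ it is nonempty, so $X\mapsto\Nam(X)$ is injective and every qualified name lies in at most one $\Nam(X)$ --- which is what makes the lookup $\vocclass{\,\cdot\,}$ inside $\sel(D,L)$ unambiguous. For $\mu_i$ I would invoke the clause defining $\leq_{DL}$ with the identity $m=\mathrm{id}_i$ (so $D(\mathrm{id}_i)(S)=S$): it forces $(i,L(i)/x)\sim_{DL}(i,L(i)/x')$ for all $x,x'\in S$, so $\vocclass{L(i)/x}$ does not depend on the chosen representative $x$ of $S\in D(i)$, making $\mu_i$ a well-defined function $D(i)\to\sel(D,L)$ (its value at $\vocclass{x}$ being $\Nam(X)$ for $X$ the $\sim_{DL}$-class of $(i,L(i)/x)$). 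The cocone law $\mu_j\circ D(m)=\mu_i$ then drops out of the same clause for an arbitrary $m:i\to j$: if $T=D(m)(S)$ it relates $(i,L(i)/x)$ to $(j,L(j)/y)$ for all $x\in S$ and $y\in T$.

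For the universal property, given an arbitrary cocone $(C',(\nu_i))$ over $D$ I would define $h:\sel(D,L)\to C'$ by $h(\Nam(X))=\nu_i(\vocclass{x})$ for any element $(i,L(i)/x)$ of $X$, where $\vocclass{x}$ denotes the unique element of the vocabulary $D(i)$ containing $x$. Well-definedness is the one thing to check: the relation $(i,L(i)/x)\mathrel{R}(j,L(j)/y)\iff\nu_i(\vocclass{x})=\nu_j(\vocclass{y})$ on $\Sym(D,L)$ is an equivalence relation containing $\leq_{DL}$ --- a one-step $\leq_{DL}$ comes with some $m:i\to j$ and $D(m)(\vocclass{x})=\vocclass{y}$, whence $\nu_i(\vocclass{x})=\nu_j(D(m)(\vocclass{x}))=\nu_j(\vocclass{y})$ since $\nu$ is a cocone --- so $R\supseteq\sim_{DL}$. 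Then $h\circ\mu_i=\nu_i$ by unwinding the definitions, and $h$ is the unique such map because every element of $\sel(D,L)$ equals $\mu_i(\vocclass{x})$ for a suitable $i$ and element $(i,L(i)/x)$ of the corresponding class, which pins down its image. As $h$ is a function between vocabularies it is a $\Vocab$-morphism, concluding the proof.

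I do not expect a real obstacle: this is the standard coproduct-quotient colimit argument. The only thing requiring genuine attention is the bookkeeping --- consistently descending from structured symbols to their constituent qualified names, and being explicit about where name-clash-freeness enters: exactly to guarantee that $\sel(D,L)$ is a legitimate vocabulary and that $X\mapsto\Nam(X)$ (equivalently, the lookup $\vocclass{\,\cdot\,}$ in the colimit) is injective, so that $h$ is well-defined on the set $\sel(D,L)$ and not merely on $\Sym(D,L)/\sim_{DL}$.
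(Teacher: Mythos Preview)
The paper does not actually supply a proof of this theorem, neither in the main text nor in the appendix of omitted proofs; it is stated and then immediately followed by a list of properties of the resulting selection. So there is nothing to compare against.

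Your argument is correct and is the natural one. You correctly exploit that $\Vocab$ is declared to be a \emph{full} subcategory of $\Set$, so it suffices to exhibit a $\Set$-colimit whose apex happens to be a vocabulary; the mediating map is then automatically a $\Vocab$-morphism. The two points that need care are handled properly: (i) well-definedness of $\mu_i$ via the identity edge $m=\mathrm{id}_i$ in the clause defining $\leq_{DL}$, which forces all qualified names inside a single structured symbol $S\in D(i)$ into the same $\sim_{DL}$-class; and (ii) the role of name-clash-freeness, which you pin down precisely as guaranteeing both that the $\Nam(X)$ are pairwise disjoint (so $\sel(D,L)$ is a vocabulary) and that $X\mapsto\Nam(X)$ is injective (so the mediating map $h$ is well-defined on $\sel(D,L)$ itself, not merely on the quotient $\Sym(D,L)/\sim_{DL}$). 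The cocone and universality verifications are the standard coproduct--quotient argument and go through as you describe.
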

%% \begin{proof}
%% Straightforward.
%% \end{proof}

$\sel$ does not exactly have the desirable properties described in Section~\ref{sec:prop}.
But it has variants of them, which is why we recommend $\sel$ as a good trade-off:
\begin{compactitem}
\item $\sel$ is complete in the sense that labels can be added to any diagram to obtain name-clash-freeness.
\item $\sel$ reduces to union for name-clash-free unlabeled diagrams of inclusions (and therefore satisfies interchange).
\item $\sel$ has pushout-stable inclusions for name-clash-free unlabeled diagrams $A\stackrel{\alpha}{\leftarrow}P\hookrightarrow B$ in the sense that all qualified names of $A$ are mapped to themselves in the selected pushout.
% This result can be strengthened if we define f:A\to B \in \Vocab$ to be an inclusion if $f$ is injective and satisfies $f(x)\subseteq x$.
% That makes sense if all practical references to symbols are made using representatives:
%   all $A$-judgments that use qualified names to refer to symbols are valid $B$-judgments
%   all $B$-models that interpret qualified names are also $A$ models
% However, with this stronger definition, the symbol functor does not preserve inclusions anymore.
\item $\sel$ has natural names in the sense that $L(i)/x$ can be used to identify the corresponding symbol in the colimit, and every symbol in the colimit is of that form.
\item $\sel$ is coherent for all labeled diagrams in which all $\sel$-colimit nodes are unlabeled.
%% Indeed, if $i$ is a $\sel$-colimit node with $L(i)=\nil$, it contributes only names $\nil/x=x$ to $\sel(D,L)$, all of which are anyway contributed by some node in $\sel(D\setminus i,L)$.
\end{compactitem}
\ednote{FR: The running example about EQL should be picked up here again. But given the space restrictions, it might have to wait until we write a longer version. TM: indeed, this has to wait.}
\ednote{Reviewer asks: Why does the selection sel have these properties? How beneficial are they in practice? What are the implications when implementing colimits, e.g. in Hets?}
\ednote{Reviewer: difficult to assess to what extent this is an improvement over the previous selection techniques --- indeed, further study is needed. TM: have added a sentence in the conclusion.}

\section{Conclusion}\label{sec:conclusion}
We have provided some useful principles for colimit selection, and
studied how far these principles can be actually realised. Some
principles contradict each other, so they need to be prioritised. The
overall goal is to give the user as much control and predictability
over names as possible. This is particularly important for languages
such as CASL and DOL, providing powerful constructs for both
parameterisation and combination of networks, realised through
colimits.  We have shown that our results are stable under products
and Grothendieck constructions; hence they carry over to more complex
signature categories like those of many-sorted logics, HasCASL
\cite{SchroderMossakowski08} (without subsorts) or even categories of
heterogeneous specification (which usually are also obtained via a
Grothendieck construction).

While we have worked with $\Set$ and $\Set$-like categories, future work
should extend the results to more complex categories. E.g.\ the
signature category of the subsorted CASL logic cannot be obtained from $\Set$ by
products and indexing; instead some quotient construction is needed
\cite{CASLColimits}. Another open question is whether coherence
for pushouts can usefully be generalised to other types of colimit.
Moreover, it also will be useful to investigate further 
the pros anc cons of the different selection techniques
(exploitation of name-clash-freeness versus labeled diagrams and structured
symbols) that we have discussed.

One important motivation for this work has been the need to obtain a
better theory for the implementation of colimits in Hets. Currently,
the implementation follows the majority principle only, which led to
complaints from the user community, especially from the Coinvent
project using colimits for conceptual blending. In the future, this
will be revised according to the results of this paper.

\bibliographystyle{plain}
\bibliography{paper}

\begin{thebibliography}{10}

\bibitem{BaumeisterEtAl04}
H.~Baumeister, M.~Cerioli, A.~Haxthausen, T.~Mossakowski, P.~D. Mosses,
  D.~Sannella, and A.~Tarlecki.
\newblock {CASL} {S}emantics.
\newblock In Peter~D. Mosses, editor, {\em {CASL} {R}eference {M}anual}, volume
  2960 of {\em Lecture Notes in Computer Science}, part {\"I}II. Springer
  Verlag, London, 2004.
\newblock Edited by D. Sannella and A. Tarlecki.

\bibitem{Borceux94}
F.~Borceux.
\newblock {\em Handbook of Categorical Algebra {I} -- {III}}.
\newblock Cambridge University Press, 1994.

\bibitem{CazuanescuR1997}
Virgil~Emil C{\u{a}}z{\u{a}}nescu and Grigore Ro{\c{s}}u.
\newblock Weak inclusion systems.
\newblock {\em Mathematical Structures in Computer Science}, 7(2):195--206,
  April 1997.

\bibitem{weakcol}
M.~Codescu and T.~Mossakowski.
\newblock Heterogeneous colimits.
\newblock In F.~Boulanger, C.~Gaston, and P.-Y. Schobbens, editors, {\em
  MoVaH'08 Workshop}. IEEE press, 2008.

\bibitem{IIMT}
R.~Diaconescu.
\newblock {\em Institution-independent Model Theory}.
\newblock Birkh\"{a}user Basel, 2008.

\bibitem{Diaconescu1993}
R.~Diaconescu, J.A. Goguen, and P.~Stefaneas.
\newblock {Logical Support for Modularisation}.
\newblock In {\em 2nd Workshop on Logical Environments}, pages 83--130. CUP,
  New York, 1993.

\bibitem{CategoricalManifesto}
J.~A. Goguen.
\newblock A categorical manifesto.
\newblock {\em Mathematical Structures in Computer Science}, 1:49--67, 1991.

\bibitem{GoguenBurstall92}
J.~A. Goguen and R.~M. Burstall.
\newblock Institutions: Abstract model theory for specification and
  programming.
\newblock {\em Journal of the ACM}, 39:95--146, 1992.

\bibitem{W3C:NOTE-swbp-vocab-pub-20080828}
W3C SEMANTIC WEB DEPLOYMENT~WORKING GROUP.
\newblock Best practice recipes for publishing rdf vocabularies.
\newblock {\em W3C Working Group Note, 28 August 2008.
  \url{http://www.w3.org/TR/2008/NOTE-swbp-vocab-pub-20080828/}}, 2008.

\bibitem{KutzEtAl14b}
O.~Kutz, J.~Bateman, T.~Mossakowski, F.~Neuhaus, and M.~Bhatt.
\newblock E pluribus unum - formalisation, use-cases, and computational support
  for conceptual blending.
\newblock In T.~R. Besold, M.~Schorlemmer, and A.~Smaill, editors, {\em
  Computational Creativity Research: Towards Creative Machines}, volume~7 of
  {\em Atlantis Thinking Machines}, pages 167--196. Atlantis Press, 2015.

\bibitem{CASLColimits}
T.\ Mossakowski.
\newblock Colimits of order-sorted specifications.
\newblock In F.~{Parisi Presicce}, editor, {\em Proc.\ 12th WADT}, volume 1376
  of {\em LNCS}, pages 316--332. Springer, 1998.

\bibitem{Mossakowski00c}
T.~Mossakowski.
\newblock Specification in an arbitrary institution with symbols.
\newblock In C.~Choppy, D.~Bert, and P.~Mosses, editors, {\em WADT 1999},
  volume 1827 of {\em LNCS}, pages 252--270. Springer Verlag, London, 2000.

\bibitem{hets}
T.~Mossakowski, C.~Maeder, and K.~L{\"u}ttich.
\newblock The {H}eterogeneous {T}ool {S}et.
\newblock In O.~Grumberg and M.~Huth, editors, {\em TACAS 2007}, volume 4424 of
  {\em Lecture Notes in Computer Science}, pages 519--522. Springer,
  Heidelberg, 2007.

\bibitem{MossakowskiEtAl13d}
Till Mossakowski, Oliver Kutz, Mihai Codescu, and Christoph Lange.
\newblock The distributed ontology, modeling and specification language.
\newblock In Chiara~Del Vescovo, Torsten Hahmann, David Pearce, and Dirk
  Walther, editors, {\em WoMo 2013}, volume 1081 of {\em CEUR-WS online
  proceedings}, 2013.

\bibitem{MossakowskiTarlecki09}
Till Mossakowski and Andrzej Tarlecki.
\newblock Heterogeneous logical environments for distributed specifications.
\newblock In Andrea Corradini and Ugo Montanari, editors, {\em WADT 2008},
  number 5486 in Lecture Notes in Computer Science, pages 266--289. Springer,
  2009.

\bibitem{CASL-RM}
Peter~D. Mosses, editor.
\newblock {\em {CASL} Reference Manual}.
\newblock Number 2960 in LNCS. Springer Verlag, 2004.

\bibitem{DOL-OMG}
{Object Management Group}.
\newblock The distributed ontology, modeling, and specification language
  ({DOL}), 2015.
\newblock OMG draft standard available at \url{http://www.omg.org/spec/DOL/}.

\bibitem{rabe:howto:14}
F.~Rabe.
\newblock {How to Identify, Translate, and Combine Logics?}
\newblock {\em Journal of Logic and Computation}, 2014.
\newblock doi:10.1093/logcom/exu079.

\bibitem{SpecInst}
D.~Sannella and A.~Tarlecki.
\newblock Specifications in an arbitrary institution.
\newblock {\em Information and Computation}, 76:165--210, 1988.

\bibitem{SchroderMossakowski08}
Lutz Schr{\"o}der and Till Mossakowski.
\newblock Hascasl: Integrated higher-order specification and program
  development.
\newblock {\em Theoret. Comp. Sci.}, 410(12-13):1217--1260, 2009.

\bibitem{DBLP:journals/corr/SchultzSVW16}
Patrick Schultz, David~I. Spivak, Christina Vasilakopoulou, and Ryan Wisnesky.
\newblock Algebraic databases.
\newblock {\em CoRR}, abs/1602.03501, 2016.

\bibitem{DBLP:conf/birthday/Smith06}
Douglas~R. Smith.
\newblock Composition by colimit and formal software development.
\newblock In Kokichi Futatsugi, Jean{-}Pierre Jouannaud, and Jos{\'{e}}
  Meseguer, editors, {\em Algebra, Meaning, and Computation, Essays Dedicated
  to Joseph A. Goguen on the Occasion of His 65th Birthday}, volume 4060 of
  {\em Lecture Notes in Computer Science}, pages 317--332. Springer, 2006.

\bibitem{DBLP:journals/tcs/TarleckiBG91}
Andrzej Tarlecki, Rod~M. Burstall, and Joseph~A. Goguen.
\newblock Some fundamental algebraic tools for the semantics of computation:
  Part 3: Indexed categories.
\newblock {\em Theor. Comput. Sci.}, 91(2):239--264, 1991.

\bibitem{journals/ase/WilliamsonHB01}
K.~E. Williamson, M.~Healy, and R.~A. Barker.
\newblock Industrial applications of software synthesis via category
  theory-case studies using {S}pecware.
\newblock {\em Autom. Softw. Eng}, 8(1):7--30, 2001.

\bibitem{zimmermann-fois}
A.~Zimmermann, M.~Kr\"otzsch, J.~Euzenat, and P.~Hitzler.
\newblock {Formalizing Ontology Alignment and its Operations with Category
  Theory}.
\newblock In {\em Proc.\ of FOIS-06}, pages 277--288, 2006.

\end{thebibliography}

\newpage
\appendix

\section{Omitted Proofs}
\textbf{Theorem~\ref{prop:set-standard-selection}.}
$\Set$ (with standard inclusions and the identity symbol functor) has a selection of colimits that has
 completeness, pushout-stable inclusions, total pushouts and interchange. 
Moreover, for name-clash-free diagrams, this selection has natural names, coherent pushouts and pushout-stable names.
\begin{proof}
If name-clash-freeness is satisfied and the diagram consists of
inclusions only, just take the union as colimit,
i.e.\ $C=\bigcup_{i\in|I|}D(i)$. This shows that interchange holds.

%In order to show naturalness of names,
%let any cocone
%$(D,(\nu_i)_{i\in|I|})$ be given, define the mediating morphism $c:C\to D$
%as $c(x)=\nu_i(x)$ if $x\in D(i)$. Clearly, this is the only possible
%choice, but we need to show that it is well-defined. If $x\in D(i)\cap D(j)$,
%by name-clash-freeness there are $m:k\to i,n:k\to j\in D$ with $x\in D(k)$.
%By the cocone property of $(D,(\nu_i)_{i\in|I|})$, $\nu_i(x)=\nu_k(x)=\nu_j(x)$.
%{\xymatrixcolsep{0ex} \xymatrixrowsep{2ex}
%$$\xymatrix{
%&D&\\
%D(i)\ar[ur]^{\nu_i}&&D(i)\ar[ul]_{\nu_j}\\
%&D(k)\ar[uu]_{\nu_k}\ar@{_{(}->}[ul]\ar@{^{(}->}[ur]
%}$$
%}

Given a span $B \xhookleftarrow{\iota} P
\stackrel{\sigma}{\rightarrow} A$ with $\sigma$ not an inclusion, let
$\sigma(B):=A\cup(B\setminus A)\cup B'$, where $\kappa:B'\cong(B\cap
A)\setminus P$ such that $B'\cap(A\cup(B\setminus A))=\emptyset$. 
Define
  $$\xymatrix{
    P\, \ar@{^{(}->}[r]^(0.3){\iota} \ar[d]^{\sigma} & B = P\cup(B\setminus P)
    \ar[d]^{\sigma^B=\sigma\cup\theta}\\
    A\, \ar@{^{(}->}[r] & \sigma(B)=A\cup((B\setminus (P\cup A))\cup B')
  }$$
where $\theta:B\setminus P\to(B\setminus (P\cup A))\cup B'$ is given by
$$\theta(x)=\twofullcase{\kappa^{-1}(x)}{x\in(B\cap A)\setminus P}{x}
{x\in B\setminus (P\cup A)}.$$
Suppose there is any cocone $(C,\nu_A:A\to C,\nu_B:B\to C)$. The mediating
morphism $c:\sigma(B)\to C$ is defined as
$$\threefullcase{\nu_A(x)}{x\in A}{\nu_B(x)}{x\in B\setminus A}{\nu_B(\kappa^{-1}(x))}{x\in B'}$$
This shows that inclusions are pushout-stable, and total pushouts exist.
Moreover, if name-clash-freeness holds for the span $B \xhookleftarrow{\iota} P
\stackrel{\sigma}{\rightarrow} A$, $B'=\emptyset$, hence
$\theta$ is the identity and we have natural names for pushouts.
Using the notation of the coherence diagrams, vertical coherence of
pushouts can be shown as follows: by name-clash-freeness, $(B\cap
A)\setminus P=\emptyset$, hence $B'=\emptyset$ and
$\sigma_1(B)=A\cup(B\setminus (P\cup A))$. Similarly,
$\sigma_2(A\cup(B\setminus (P\cup A)))=A'\cup((A\cup(B\setminus (P\cup
A)))\setminus(A\cup A'))=A'\cup(B\setminus(P\cup A\cup A'))$.  On the
other hand, $(\sigma_1;\sigma_2)(B)=A'\cup(B\setminus(P\cup A')$.  But
since $(B\cap A)\setminus P=\emptyset$, $B\setminus(P\cup A\cup
A')=B\setminus(P\cup A')$, hence
$\sigma_2(\sigma_1(B))=(\sigma_1;\sigma_2)(B)$. Concerning horizontal
coherence, $\sigma^{B_1}(B_2)=\sigma(B_1)\cup(B_2\setminus(B_1\cup\sigma_1(B)))=A\cup(B_1\setminus (P\cup A))\cup(B_2\setminus(B_1\cup A\cup(B_1\setminus (P\cup A))))=A\cup(B_1\setminus (P\cup A))\cup(B_2\setminus(B_1\cup A))=A\cup(B_2\setminus (P\cup A))=\sigma(B_2)$.

In order to show naturalness of names,
let $D$ be a name-clash-free diagram and 
define its colimit $(C, (\mu_i)_{i\in |I|})$ as follows.
$C$ is defined by selecting from each equivalence class 
$X \in \Sym(D)/\sim_D$ a representative $r(X) \in \Nam(X)$
and for each index $i$, and each $(i,x) \in X$,
we define $\mu_i(x) = r(X)$. 
For the particular cases of diagrams that appear in the proof already,
namely those consisting of inclusions only and horizontal and vertical 
compositions of spans with one arrow being an inclusion, the choice
of representative is determined by the respective definitions of the colimit
discussed above.
By name-clash-freeness,
it cannot be the case that $r(X_1) = r(X_2)$ for two equivalence
classes $X_1, X_2$. Thus, we have in $C$ one distinct element for each
equivalence class in $\Sym(D)/\sim_D$ and thus
we have indeed selected a colimit for $D$. 

Finally, for an arbitrary non name-clash-free diagram,
select an arbitrary colimit, ensuring completeness.

The selection constructed above does not satisfy the origin and majority principles.
For example, consider a span $B \xhookleftarrow{\iota} P
\stackrel{\sigma}{\rightarrow} A$ with $\sigma$ not an inclusion.
Then in Thm.~\ref{prop:set-standard-selection},
$\sigma(B):=A\cup(B\setminus A)\cup B'$. This means any symbol from
$P$ that is renamed by $\sigma$ will not appear in the pushout object
$\sigma(B)$, contradicting the origin principle.  Moreover, because
$P\subseteq B$, such a symbol will occur twice (with different
objects) in its equivalence class, but the equivalent symbol from $A$
(occurring only once) is selected in the pushout.

But we can modify the construction to have majority-origin natural names:
For the ``other colimits'' of name-clash-free diagrams, use the majority-origin principle.
If that does not determine a representative, select one of the candidates randomly.
\end{proof}

% % % % % % % % % % % % % % % % % % % % % % % % % % % % % % % % % % % % %
\textbf{Theorem~\ref{prop:set-no-selection}.}
$\Set$ does not have a selection that has total pushouts, pushout-stable inclusions and names, and coherent pushouts.
\begin{proof}
Assume that such a selection $\sel$ of pushouts were given.
Consider the sequence of two selected pushouts (existing by total pushouts)
$$\xymatrix{ 
\{a\}\,\ar@{^{(}->}[r]\ar[d]^{a\mapsto b} & 
\{a,b\}\,\ar@{^{(}->}[r]\ar[d]^{a\mapsto b,b\mapsto x} & 
\{a,b,x\}\,\ar[d]^{a\mapsto b,b\mapsto x,x\mapsto y} \\ 
\{b\}\,\ar@{^{(}->}[r] & \{b,x\}\,\ar@{^{(}->}[r] &
\{b,x,y\}\, 
}$$ 
where the
presence of $b$ in the first pushout object and of $b$ and $x$ in the
second pushout objects follows from pushout-stable inclusions.
Moreover, $x$ and $y$ are determined by $\sel$, but from the pushout
property we can infer $b\neq x\neq y$.
 Now consider the selected pushout (again existing by total pushouts)
$$\xymatrix{
\{a\}\,\ar@{^{(}->}[rr]\ar[d]^{a\mapsto b} && \{a,b,x\}\,\ar[d]^{a\mapsto b,b\mapsto z,x\mapsto x} \\
\{b\}\,\ar@{^{(}->}[rr] && \{b,z,x\}\, 
}$$

Note that $x$ is determined by $\sel$ in the previous diagram.
Again, the presence of $b$ in the pushout object follows from
pushout-stable inclusions. 
Furthermore, the presence of $x$ in the pushout object and
the mapping $x\mapsto x$ in the pushout inclusion follows
from the pushout-stable names.
Moreover, $z$ is determined by $sel$, but from the pushout
property we can infer $b\neq z\neq x$.

Altogether, in the first diagram, the rightmost pushout injection
maps $b\mapsto x,x\mapsto y$, while in the second diagram, it maps
$b\mapsto z,x\mapsto x$. Since $z\neq x\neq y$, the maps differ.
Hence, coherence does not hold.
\end{proof}

% % % % % % % % % % % % % % % % % % % % % % % % % % % % % % % % % % % % %
\textbf{Theorem~\ref{thm:products}.}
    Let $(\C_j)_{j\in J}$ be a family of inclusive
    categories with symbols and assume selections of colimits $sel_j$ 
    that have the properties in Thm.~\ref{prop:set-standard-selection} or Thm.~\ref{prop:set-standard-selection}.
    Then the product $\Pi_{j\in J}C_j$ can be canonically turned into a inclusive
    category with symbols that also has a selection of colimits $\sel$ with the 
    same properties.

  \begin{proof}
   The product becomes an inclusive category by using tuples of inclusions as the inclusion morphisms.
   The symbol functors are lifted to the product by defining
   \[|\_|=\biguplus_{i\in J}|\pi_j(\_)|\]
  
   Let $D:\I \to \Pi_{j\in J}C_j$ be a diagram. We can define a selection of colimits
   by taking $\sel(D)$ to be the component-wise combination of $\sel_j(\pi_j(D))$
   where $\pi_j$ are the projections. It is easy to show that
   $\sel$ has the desired properties whenever the $\sel_j$ have them, 
   based on the fact inclusions and colimits in products of inclusive 
   categories are defined component-wise.
  \end{proof}

% % % % % % % % % % % % % % % % % % % % % % % % % % % % % % % % % % % % %
\textbf{Theorem~\ref{thm:Grothendieck}.}
    Let $B:\Ind^{op}\to \ICat$ be an indexed inclusive category (where $\Ind$ is inclusive
    itself) such that 
     \begin{itemize}
       \item $B$ is \emph{locally reversible}, i.e.~for each $u:i\to j$ in $\Ind$, 
       $B_u:B_j \to B_i$ has a selected left adjoint $F_u: B_i\to B_j$ (note that we do not require coherence of the $F_u$),
        
       \item $\Ind$ has a selection of colimits $sel_\Ind$,
       \item each category $B_i$ has a selection of colimits $\sel^i$, for $i\in |\Ind|$. 
     \end{itemize}    
    
    Then  
    the Grothendieck category $B^\#$ is itself an inclusive category.\footnote{Note that this construction 
    extends to institutions, yielding Grothendieck institutions, see
    \cite{IIMT}.}

  \begin{proof} 
  
    We prove that $B^\#$ is inclusive. Recall that morphisms in $B^\#$
    have form $(i, A_i)\stackrel{(u,\sigma)}{\longrightarrow} (j,
    A_j)$, where $u:i \to j\in\Ind$ and $\sigma: A_i \to B_u(A_j)\in
    B_i$. Now $(u,\sigma)$ is an inclusion if both $u$ and $\sigma$
    are.  The least element is the pair consisting of the least
    element $\emptyset_\Ind$ of $\Ind$ and the least element of
    $B_{\emptyset_\Ind}$.  Given two objects of $B^\#$, $(i, A_i)$ and
    $(j, A_j)$, their union is $(i \cup j, F_{\iota_{i\subseteq i \cup
      j}}(A_i) \cup F_{\iota_{j\subseteq i \cup j}}(A_j))$.  Given a
    class $\{(j,A_j)\}_{j\in J}$ of objects in $B^\#$, their product
    is $(\bigcap J,\bigcap B_{\iota_{\bigcap J\subseteq j}}A_j)$.
    Moreover, we have that a union-intersection square in $B^\#$ is a
    pushout.
  \end{proof}

% % % % % % % % % % % % % % % % % % % % % % % % % % % % % % % % % % % % %
\textbf{Theorem~\ref{thm:Grothendieck-properties}.}
  Under the assumptions of Thm.~\ref{thm:Grothendieck} and Thm.~\ref{prop:Grothendieck-symbols}, extended by:
     \begin{itemize}
       \item $F_u$ preserves inclusions, and moreover, 
       \item the unit
         and counit of the adjunction are inclusions.
     \end{itemize}    
  If $\Ind$ and each $B_i$ have colimit selections enjoying the properties of
  Thm.~\ref{prop:set-standard-selection}, then so does $B^\#$.
%\ednote{TM: have moved selection
%  of colimits to after introduction of symbol functor.}

  \begin{proof} 
  From the new assumptions, we can infer that 
\begin{eqnarray}\label{ass1}
\text{$B_u$, $F_u$, $\_^\#$, $\_^\flat$ and $\mathit{Lift}_v$
  preserve inclusions.}\\
\text{The $\tau_{u,v}$ are identities, i.e.\ the $F_u$ are coherent.}\label{ass2}
\end{eqnarray}

We can apply Theorem 2 of
    \cite{DBLP:journals/tcs/TarleckiBG91} to obtain that $B^\#$ has
    colimits. For obtaining selected colimits in $B^\#$, the proof
    from \cite{DBLP:journals/tcs/TarleckiBG91}, which splits colimits
    into coproducts and coequalisers, needs to be replaced by a direct
    proof for all colimits. We recall two preparatory lemmas from
\cite{DBLP:journals/tcs/TarleckiBG91}:
   \begin{lemma} Given index morphisms $u:i\to j$ and $v:j\to k$,
   there is an isomorphism $\tau_{u,v}:F_{u;v}\to F_u;F_v$.
   \end{lemma}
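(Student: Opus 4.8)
The plan is to obtain $\tau_{u,v}$ for free from the uniqueness of left adjoints. First I would record that $B$ is a genuine functor on $\Ind^{op}$ (not merely a pseudofunctor), so composition is strict: for $u:i\to j$ and $v:j\to k$ the two functors $B_{u;v}$ and $B_v;B_u$ are \emph{literally equal} as functors $B_k\to B_i$. Next I would invoke the standard composition of adjunctions: from the selected adjunctions $F_u\dashv B_u$ and $F_v\dashv B_v$ one assembles, by whiskering units and counits, an adjunction $F_u;F_v\dashv B_v;B_u$. By the strictness just noted, this exhibits $F_u;F_v$ as a left adjoint of $B_{u;v}$. But $F_{u;v}$ is, by hypothesis, also a selected left adjoint of $B_{u;v}$.

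Then I would apply the uniqueness of adjoints: two left adjoints of one and the same functor are related by a unique natural isomorphism compatible with the units. Concretely, for $A\in B_i$ I would define $\tau_{u,v,A}:F_{u;v}(A)\to F_v(F_u(A))$ as the adjunct, under the adjunction $F_{u;v}\dashv B_{u;v}$, of the composite $A\to B_u(F_u(A))\to B_u(B_v(F_v(F_u(A))))=B_{u;v}(F_v(F_u(A)))$ built from the units $\eta^u$ and $\eta^v$ of the two selected adjunctions; a two-sided inverse arises symmetrically from the universal property of the composite adjunction $F_u;F_v\dashv B_{u;v}$. Naturality of $\tau_{u,v}$ in $A$ follows from naturality of $\eta^u$ and $\eta^v$, and invertibility follows from the triangle identities (equivalently, from the uniqueness clause). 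This produces the natural isomorphism $\tau_{u,v}:F_{u;v}\to F_u;F_v$.

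There is essentially no hard step here; the argument is routine bookkeeping for composing adjunctions. The single point I would emphasise — and the reason $\tau_{u,v}$ comes out as an honest natural isomorphism rather than a coherence datum to be carried around — is that strictness of $B$ means there is no comparison isomorphism intervening on the right-adjoint side, so $F_{u;v}$ and $F_u;F_v$ are genuinely two left adjoints of the same functor. I would also note that the lemma asserts only the existence of each individual $\tau_{u,v}$: no compatibility among these isomorphisms is claimed (the $F_u$ need not be coherent), in line with the hypotheses of Thm.~\ref{thm:Grothendieck}.
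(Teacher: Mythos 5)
Correct. The paper does not prove this lemma itself---it merely recalls it from \cite{DBLP:journals/tcs/TarleckiBG91}---and your argument (compose the two selected adjunctions to obtain $F_u;F_v\dashv B_v;B_u=B_{u;v}$, then invoke uniqueness of left adjoints of the strictly equal functor $B_{u;v}$) is exactly the standard proof of that recalled fact, with the comparison map and its inverse constructed correctly from the units.
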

   \begin{lemma}
   Given an index morphism $v:i\to j$, any morphism
   $(u,\sigma):(k,A)\to(i,B)\in B^\#$ can be lifted along $v$
   to a morphism in $B_j$:
   $$\textit{Lift}_v(u,\sigma)=\tau_{u,v};F_v(\sigma^\#):F_{u;v}(A)\to F_v(B)\footnote
   {$\sigma:A\to B_u(B)$, hence its adjoint is $\sigma^\#:F_u(A)\to B$.}$$
   \end{lemma}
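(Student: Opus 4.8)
The plan is to read the Lemma as a \emph{construction}: I would exhibit $\mathit{Lift}_v(u,\sigma)$ explicitly as a composite of three morphisms and then check that the composite type-checks as a morphism of $B_j$. Recall the data: $u:k\to i$ and $v:i\to j$ in $\Ind$, and $\sigma:A\to B_u(B)$ in $B_k$ with $A\in B_k$ and $B\in B_i$. Since $u;v:k\to j$ and $v:i\to j$, the functors $F_{u;v}:B_k\to B_j$ and $F_v:B_i\to B_j$ both land in $B_j$, so $F_{u;v}(A)$ and $F_v(B)$ are objects of $B_j$ --- which is where the lifted morphism has to live.

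First I would transpose $\sigma$ along the selected adjunction $F_u\dashv B_u$: since $\sigma:A\to B_u(B)$ in $B_k$, its adjoint transpose is the morphism $\sigma^\#=\varepsilon_B\circ F_u(\sigma):F_u(A)\to B$ in $B_i$, where $\varepsilon$ denotes the counit of $F_u\dashv B_u$. Applying the functor $F_v:B_i\to B_j$ gives $F_v(\sigma^\#):F_v(F_u(A))\to F_v(B)$ in $B_j$. The previous Lemma supplies a natural isomorphism $\tau_{u,v}:F_{u;v}\Rightarrow F_u;F_v$ of functors $B_k\to B_j$; its component at $A$ is an isomorphism $(\tau_{u,v})_A:F_{u;v}(A)\to F_v(F_u(A))$ in $B_j$. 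Composing, I set
\[\mathit{Lift}_v(u,\sigma)\ :=\ (\tau_{u,v})_A\ ;\ F_v(\sigma^\#)\ :\ F_{u;v}(A)\longrightarrow F_v(B),\]
which is a morphism of $B_j$ because each factor is, and the codomain of $(\tau_{u,v})_A$ equals the domain of $F_v(\sigma^\#)$. That establishes the Lemma as stated.

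The only thing that requires care here is keeping the variances straight: $B$ is contravariant while the $F_u$ are covariant, and $\tau_{u,v}$ is written in diagrammatic order so that $F_u;F_v$ means $F_v\circ F_u$; at each step I would re-confirm that the (co)domain really sits in the intended fibre. Beyond that, the bare statement has no deeper obstacle --- it is bookkeeping around an adjoint transpose. The substance actually used downstream (that $\mathit{Lift}$ turns a diagram in $B^\#$ into a genuine diagram in $B_j$, i.e.\ that it respects identities and composition and interacts correctly with $F_{v_1;v_2}$ through the $\tau$'s) reduces, via naturality of $\tau$ and the triangle identities of $F_u\dashv B_u$, to the coherence (pentagon) identities for the family $(\tau_{u,v})$ recorded in \cite{DBLP:journals/tcs/TarleckiBG91}; I would cite those rather than re-derive them. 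Under the additional hypotheses of Thm.~\ref{thm:Grothendieck-properties} the $\tau_{u,v}$ are moreover identities, so there $\mathit{Lift}_v(u,\sigma)$ is simply $F_v(\sigma^\#)$ under the identification $F_{u;v}=F_v\circ F_u$.
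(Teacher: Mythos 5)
Your proposal is correct and coincides with the paper's treatment: the paper recalls this lemma from \cite{DBLP:journals/tcs/TarleckiBG91} without a separate proof, since the displayed formula $\tau_{u,v};F_v(\sigma^\#)$ \emph{is} the construction and the only content is the type-check you carry out (transpose $\sigma$ along $F_u\dashv B_u$, apply $F_v$, precompose with the component of $\tau_{u,v}$ at $A$). Your closing remark that the $\tau_{u,v}$ become identities under the extra hypotheses also matches the paper's assumption (2) in the proof of Theorem~\ref{thm:Grothendieck-properties}.
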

   We are now prepared to compute selected colimits in $B^\#$.  Given
   a diagram $D:\I\to B^\#$, let $(c,(\mu_i)_{i\in|\I})$ be the
   selected colimit of $D;\pi_1$, where $\pi_1$ is the projection to
   $\Ind$ (and $\pi_2$ the projection to the second component).
   Define a diagram $D':\I\to B_c$ by $D'(i)=F_{\mu_i}(\pi_2(D(i)))$
   and $D'(m:i\to j)= \mathit{Lift}_{\pi_1(D(j))}(D(m))$.  Let
   $(C,(\nu_i)_{i\in|\I})$ be the selected colimit of $D'$ in
   $B_c$. Then $((c,C),(\mu_i,\nu_i^\flat)_{i\in|\I})$ (where
   $\nu_i^\flat:\pi_2(D(i))\to B_{\mu_i}(C)$ is adjoint to
   $\nu_i:F_{\mu_i}(\pi_2(D(i)))\to C$) is the selected colimit in
   $B^\#$. 

We now prove the properties of the colimit selection:
  \begin{description}
  \item[Natural names]: 
  follows immediately from the assumption   
  that the diagram $D:\I \to B^\#$ is name-clash-free and
  the construction of the colimit in $B^\#$.
    \item[Pushout-stable inclusions:]
    Using the component-wise construction of colimits and (\ref{ass1}).
    \item[Coherent pushouts:] we treat the vertical case only.
Consider the sequence of pushouts
$$
\xymatrix{
(p,P)\, \ar@{^{(}->}[r] \ar[d]_{(u_1,\sigma_1)}& (b,B) \ar[d] \\
(a,A)\,  \ar@{^{(}->}[r]^(0.6){(\iota,\_)} \ar[d]_{(u_2,\sigma_2)}& \bullet \ar[d] \\
(a',A)'\, \ar@{^{(}->}[r]^(0.6){(\kappa,\_)} & \bullet 
}$$
At the index level, we have
$$
\xymatrix{
p\, \ar@{^{(}->}[r] \ar[d]_{u_1}& b \ar[d]_{u_1^b} \ar[ddr]^{(u_1;u_2)^b}\\
a\,  \ar@{^{(}->}[r] \ar[d]_{u_2}& u_1(b) \ar[d]_{u_2^{u_1(b)}} \\
a'\, \ar@{^{(}->}[r] & u_2(u_1(b)) \ar@{=}[r] & (u_1;u_2)(b) 
}$$
At the level of the individual fibres, the two consecutive pushouts are
constructed as:
$$
\xymatrix{
F_{u_1;\iota}P\, \ar@{^{(}->}[r] \ar[d]_{\sigma_1}& F_{u_1^b}B \ar[d]_{(\sigma_1^B)^\#} \\
F_{\iota}A\,  \ar@{^{(}->}[r] & \sigma_1(B) 
}$$
$$
\xymatrix{
F_{u_2;\kappa}A\, \ar@{^{(}->}[r] \ar[d]& F_{u_2^{u_1(b)}}\sigma_1(B) \ar[d]_{(\sigma_2^{\sigma_1(B)})^\#} \\
F_\kappa A'\,  \ar@{^{(}->}[r] & \sigma_2(\sigma_1(B))
}$$
When applying $F_{u_2^{u_1(b)}}$ to the first diagram, by
(\ref{ass2}), both diagrams can be pasted together. The left and upper
legs of this composition are identical to those of the outer pushout
diagram for obtaining $((\sigma_1;\sigma_2)^B)^\#$. Then apply
coherence for $B_{u_2(u_1(B))}$.% check details!

    \item[Pushout-stable names:] 
Given a span $B \xhookleftarrow{(\iota,\_)} P
\stackrel{(u,\sigma)}{\rightarrow} A$ in $B^\#$,
 natural names for pushouts give us in $\Ind$:
  $$\xymatrix{
    p\, \ar@{^{(}->}[r]_\iota \ar[d]^{u} & b \ar[d]^{u^b} & \,b\setminus p \ar@{_{(}->}[l]^\kappa  \ar@{=}[d]\\
    a\, \ar@{^{(}->}[r]_\lambda & u(b) & \,u(b)\setminus a \ar@{_{(}->}[l]^\xi
  }$$
and in $B_{u(b)}$:
  $$\xymatrix{
    F_{u;\lambda}P\, \ar@{^{(}->}[r] \ar[d]^{\sigma^\#} & F_{u^b}B \ar[d]^{(\sigma^B)^\#} & \,F_{u^b}B\setminus F_{u;\lambda}P \ar@{_{(}->}[l]  \ar@{=}[d]\\
    F_\lambda A\, \ar@{^{(}->}[r] & \sigma(B) & \,\sigma(B)\setminus F_\lambda A \ar@{_{(}->}[l] 
  }$$
From this, we can obtain natural names for pushouts in $B^\#$.% check details!
    \item[Total pushouts:] is implied by completeness.
    \item[Interchange:] follows similar to natural names.% check details!
    \item[Completeness:] follows from that for $\Ind$ and $B_i$.
  \end{description}
  \end{proof}

\end{document}